\DeclareMathOperator{\Tr}{Tr}
\newtheorem{theorem}{Theorem}
\title{Bridging Resource Theory and Quantum Key Distribution: Geometric Analysis and Statistical Testing}
\author{Andrea D'Urbano, Michael de Oliveira, Lu\'is Soares Barbosa}
\date{January 2024}
\author{
    \large{Andrea D'Urbano} \thanks{University of Salento, Lecce, Italy. Email: \texttt{andrea.durbano@unisalento.it}.} \
    \large{Michael de Oliveira} \thanks{International Iberian Nanotechnology Laboratory; LIP6, Sorbonne Universite; INESC TEC, Braga, Portugal. Email: \texttt{michael.oliveira@inl.int}.} \ 
    \large{Lu\'is Soares Barbosa} \thanks{University of Minho, Department of Computer Science, Braga, Portugal; INESC TEC, Braga, Portugal. Email:\texttt{lsb@di.uminho.pt}.}
    }
\begin{document}

\maketitle

\begin{abstract}

Discerning between quantum and classical correlations is of great importance. Bell polytopes are well established as a fundamental tool. In this paper, we extend this line of inquiry by applying resource theory within the context of Network scenarios, to a Quantum Key Distribution (QKD) protocol. To achieve this, we consider the causal structure $P3$ that can describe the protocol, and we aim to develop useful statistical tests to assess it.

More concretely, our objectives are twofold: firstly, to utilise the underlying causal structure of the QKD protocol to obtain a geometrical analysis of the resulting non-convex polytope, with a focus on the classical behaviours. Second, we devise a test within this framework to evaluate the distance between any two behaviours within the generated polytope. This approach offers a unique perspective, linking deviations from expected behaviour directly to the quality of the quantum resource or the residual nonclassicality in protocol execution.
\end{abstract}

\section{Introduction}

Since the inception of quantum mechanics, its portrayal as a non-local theory has markedly diverged from any classical theory. This divergence was later formalised by Bell's theorem, which provided a precise and feasible test to distinguish classical common-cause correlations from quantum ones. However, the task of testing for nonclassicality has remained non-trivial, as the realm of classical correlations and no-signalling correlations—those permitted by the theory of relativity—do not offer straightforward or efficient means to characterise the space of quantum correlations. Classical correlations can be represented as a convex polytope when studying the geometrical probability space of states and measurement operations, designated Bell polytopes \cite{Bell-nonlocality-review}, whereas the quantum set of correlations does not produce a polytope, lacking a simple and efficient description. Advancing our understanding in this area is of significant interest not only for foundational research but also for information processing, given that quantum correlations play a crucial role in their performance. Specifically, the capability of accurately distinguishing between quantum and classical correlations has implications in various applications, including device-independent quantum certifications \cite{PhysRevA.80.062327}, state discrimination processes \cite{bae2015quantum}, quantum key distribution \cite{xu2020secure}, proving computational advantage of quantum circuits over classical circuits \cite{Bravyi17,Watts19,coudron2021,Grier20,Daniel22,Oliveira22,Oliveira24} and other areas \cite{reviewStatisticalMethodsQuantumVerification}.

In this context, resource theories \cite{coecke2016mathematical} have emerged as a compelling theoretical approach.%, extending beyond the conventional operational description
 Within operational theories, the distinction between resourceful and resourceless processes (or specifically between classical and quantum processes \cite{Gilad-Chitambar2019}) is drawn based solely on the outcomes of operations, through a synthetic analysis of the process's inputs and outputs. Conversely, resource-theoretic approaches aim to identify the utilisation of fundamental resources, facilitating efficient descriptions and mappings between equivalent resources under free operations, which are typically defined as processes that do not increase the initial resource being analysed. While this shift moves away from the standard information-theoretic framework, potentially impacting its applicability in cryptographic contexts \cite{Gisin_2002}, it introduces a more robust descriptive capability for the specific causal structures under consideration and enhances the ability to certify nonclassicality \cite{PhysRevA.76.030305, PhysRevA.97.022111}.

Using the terminology of resource theory, this work is going to analyse scenarios similar to those encountered in typical Quantum Key Distribution (QKD) tasks \cite{scarani2009security}. 
Quantum Key Distribution holds significant interest both theoretically and experimentally due to its potential to provide secure communication protocols. Theoretical investigations continue to explore its foundations, including its reliance on quantum correlations as first explored by Bell experiments and the CHSH (Clauser-Horne-Shimony-Holt) inequality \cite{PhysRevLett.23.880, PhysicsPhysiqueFizika.1.195}. Experimentally, QKD research aims to implement and test these protocols in real-world scenarios, demonstrating their effectiveness in secure communication applications \cite{xu2020secure}. The study of QKD thus serves as a bridge between fundamental quantum principles and practical cryptographic systems, driving advancements in both realms.

Unlike prior studies that primarily focus on self-testing for certifying the security of QKD protocols \cite{10.5555/2011827.2011830, miklin2020semi, mehic2020quantum}, our analysis diverges by emphasising the quantification of quantum resources. We aim to use the language of resource theory and Bell Polytopes to perform a geometrical analysis. Additionally, we introduce statistical tests within this framework to quantify the proximity of the experiment to ideal conditions. This approach offers a novel perspective that links geometrical quantities to the quality of the quantum resource and nonclassicality in the protocol's execution.

\subsection{Outline}

After a general background (\Cref{sec:back}) describing Bell and networks scenarios and resource theories, we proceed in \Cref{sec:QKD-Bell-scenarios} with the description of how Bell scenarios and QKD protocols are related. In particular, using resource theory we describe Network scenarios representing QKD protocols and as an example we propose the well known BB84 protocol \cite{Bennett2014}, modifying it to better suit our resource theoretic description. More importantly we define the causal structure to be used, being a correlation scenario known as $P3$  \cite{fritz2012beyond}. 

In section \ref{sec_Geometry}, we proceed with a geometrical analysis of the resulting polytope. We note that the causal structure induces a breaking in the convexity, generating a non trivial geometrical shape. We describe the behaviour space, where this object lives, and extract explicit expression for the vertices (reported in appendix \ref{appendix:B}). We obtain a test to assess if 2 vertices are connected by a line lying inside the object or not, testing and characterising its convexity.

Finally in section \ref{sec:nonclassicality}, we develop a statistical test to assert weather the expected and observed behaviours are equal considering the errors (for example due to channel noise). We also provide a numerical example to give a clearer idea of the steps to follow. Furthermore we show that the distance in behaviour space is bounded by a sum of trace distances between the density matrices representing the expected and the observed system.

\section{Background}\label{sec:back}

In this section, we will present some essential background material on Bell scenarios, Network scenarios and the resource theories framework. Specifically, we will explore the Bell polytope, which serves as a foundational concept extensively utilised in this work. Additionally, we will provide an introduction to the methods of resource theories upon which the methods presented in this paper are built.

\subsection{Bell and Network scenarios} \label{sec_Bell_Scenarios}

A prevalent framework for investigating the distinctions between quantum and classical information theory is the "Bell scenario" \cite{Bell-nonlocality-review}. In this theoretical construct, a collection of $n$ parties jointly possesses a system that is partitioned into subsystems, one assigned to each party. At each individual location, participants can choose one of $m$ distinct measurements options, each capable of yielding $d$ outcomes. This configuration is represented as the ordered triplet $(n,m,d)$. In addition, these measurements are space-like separated, to preclude any causal influence among them.
Let's denote with $x_i \in \{1, 2, \cdots ,m\}$ the choice of measurement for the $i-$th party, also referred to as \textit{setting variables}, and with $a_i\in \{1,2, \cdots ,d\}$ the corresponding outcome, called \textit{outcome variables}. The following conditional probabilities,

\begin{equation}
    p(\textbf{a}|\textbf{x})=p(a_1, a_2, \cdots , a_n | x_1, x_2 , \cdots , x_n).
\end{equation}

map each possible combination of inputs and outputs variables in the Bell scenario to a point $\mathbf{Q} \in \mathbb{R}^{(md)^n}$ since there are $m^n$ measurements, each of which having $d^n$ possible outcomes. Following the conventions introduced in \cite{tsirelson1993}, the set of all these probabilities $\textbf{b} = \{p(\textbf{a}|\textbf{x})\}$ is referred to as \textit{behaviour} or, more informally, as \textit{correlations}.
The space where these points, which describe a behaviour in the Bell scenario, exist, must be subjected to some constraints and thus live in a smaller dimensional space. Indeed the normalisation constraint of the probabilities impose,

\begin{equation}
    \sum_{\textbf{a}} p(\textbf{a}| \textbf{x}) = 1 \qquad \forall \textbf{x}.
\end{equation}

\noindent This space must also be constrained by the non-negativity of the probabilities,

\begin{equation}
    p(\textbf{a}| \textbf{x}) \geq 0 \qquad \forall \textbf{a}, \textbf{x}.
\end{equation}

These inequalities defines a polytope $\mathcal{P}$ of dimension $m^n(d^n-1)$.
The presence of a specific physical model underlying the observed correlations in a Bell scenario imposes additional restrictions on the behaviours, reducing the dimension of the polytope.

A natural limitation stemming from relativity theory is the \textit{no-signalling} constraint \cite{cirel1980quantum, barrett2005popescu}, ensuring that no faster than light signals are transmitted,

\begin{equation}
    \sum_{a_i=1}^d p(\textbf{a} | x_1, \cdots , x_i,\cdots, x_n) = \sum_{a_i=1}^d p(\textbf{a} | x_1 , \cdots , x'_i,\cdots, x_n) \qquad \forall x_i \neq x'_i .
\end{equation}

The previous equation states that the marginal probabilities of each party are independent from the choice of measurement of that party and thus the choice of measurement of the $i-th$ party cannot be signalled to the other parties. 
The set of behaviours satisfying the no-signalling constrain is again a polytope, referred to as the \textit{no-signalling polytope} $\mathcal{P_{_{\mathcal{NS}}}}$, with dimension \cite{pironio2005lifting}: $[(d-1)m +1]^n -1$. 

A further constraint can be obtained considering the behaviours described by Local Hidden Variable (LHV) theories, where is taken into account the existence of concealed, deterministic variables inherent to each individual system, maintaining locality without invoking superluminal influences. Formally,

\begin{equation}
    p(\textbf{a}|\textbf{x}) = \int_\Lambda d \lambda q(\lambda) \prod_{i} p(a_i | x_i, \lambda)
\end{equation}

\noindent where $q(\lambda)$ is the probability density distribution of the hidden variable $\lambda$, defined over set $\Lambda$. This constraint implies that the probability distribution factorise over the parties, each influenced by a variable $\lambda$, which can be viewed operationally as a \textit{shared randomness}. 
Imposing the locality constraints, the no-signalling polytope, $\mathcal{P_{_{\mathcal{NS}}}}$ is further restricted to a subset: the \textit{local polytope} $\mathcal{P_{_{\mathcal{L}}}}$, having the same dimension of the no-signalling polytope \cite{tura2015nonlocality}. 

Another possible set of behaviours in a Bell scenario can be obtained by quantum mechanics. Formally, those behaviours corresponds to the points in $\mathcal{P}$ that can be written as,

\begin{equation}
    p(\textbf{a}|\textbf{x}) = \Tr \left( \rho \bigotimes_i M_{a_i|x_i} \right).
\end{equation}

Where $\rho$ is the density matrix describing the whole system in an Hilbert space $\mathcal{H} = \bigotimes_i \mathcal{H}_i$ of arbitrary dimension; $M_{a_i|x_i}$ are POVM measurement operators on the subspace $\mathcal{H}_i$ and without loss of generality \cite{tsirelson1993}
the system can always be considered pure, $\Tr(\rho^2)=1$, and the measurement can always be orthogonal projectors (if necessary by increasing the dimension of the Hilbert space). Therefore a quantum behaviour can always be written as,

\begin{equation}
    p(\textbf{a}|\textbf{x}) = \braket{\psi|M|\psi},
\end{equation}

\noindent where $M=\bigotimes_i M_{a_i|x_i}$ and the following conditions holds: $M_{a_i|x_i}M_{a'_i|x_i} =\delta_{a_i,a'_i}M_{a_i|x_i}$ ; $\sum_{a_i}M_{a_i|x_i}=\mathbb{I}_i$.
It can be shown \cite{pitowsky1986range}, that any local behaviour is also a quantum behaviour, thus $\mathcal{Q} \subset \mathcal{L}$, however, the converse does not hold.
The polytopal structure guarantees a minimal set of facet-defining inequalities, referred to as Bell inequalities, provided by the H-representation \cite{grunbaum1967convex} of the local polytope.
Within this framework, certain quantum behaviours exist that are not in the local polytope, forming a noteworthy set of behaviours that manifest correlations exceeding classical limitations, thereby violating Bell inequalities.
Additionally, any quantum behaviour satisfies the no-signalling constraints but the converse is not true \cite{pironio2005lifting}. 
In general, the set of quantum behaviours $\mathcal{Q}$ is not a polytope, but is convex and the following chain of inclusions holds: $\mathcal{L} \subset \mathcal{Q} \subset \mathcal{NS}$.%, as represented schematically in figure \ref{fig:polytopes}. 
The dimensions of those sets is equal to the dimension of the no-signalling polytope \cite{pironio2005lifting},

\begin{equation} \label{eqn:dimensions_no_signalling_polytope}
    dim(\mathcal{P_{_{\mathcal{NS}}}}) =[(d - 1)m + 1]^n - 1.
\end{equation}

\subsection{Resource theory} \label{sec_Resource_Th}

To describe the transformations among behaviours in Bell scenarios (resources) under specific conditions (free operations) and explore the differences between classical and quantum correlations and how to use them in information theoretic settings, it can be used a framework called \textit{resource theory}  \cite{coecke2016mathematical}.
Many field of science use the notion of resource to describe processes and states (physical or logical). In this instance, the utilisation of a theoretical perspective on resources, enables the categorisation of non-classical behaviours within the Bell scenario \cite{wolfe2020quantifying} (see also \cite{schmid2020type, zjawin2023quantifying, hillery1999quantum, wagner2021using, duarte2018resource}). This approach establishes a framework that facilitates the description of no-signalling resources, non-local games, and the experiments designed to investigate them.

Let's now define a resource theory from intuitive considerations.
First and foremost, there may be different kind of resources, as well as the possibility to transform among them. This means that it should be introduced the set of the \textit{objects} resources and the set of the \textit{morphisms}, i.e. transformations among resources. Transformations should be designed in a manner that allows for sequential composition. In addition it should be possible to consider set of resources as a resource itself, and similarly it should be possible to compose in such manner also the transformations among resources acting in parallel on the sub-components of resources. Finally it is reasonable to assume the existence of an identity resource, that exerts no influence on other resources when combined with them. This heuristic facts can be formalised in the description of a resource theory as a symmetric monoidal category (SMC), where the objects are represents resources and the morphisms represent transformations among them implementable at no cost \cite{coecke2016mathematical}. 
SMCs offer an intuitive graphical calculus: there are theorems that establish a direct correspondence between equational reasoning within an SMC and the deformation of diagrams \cite{coecke2016mathematical}. 
It is important to observe that since the unit object is interpreted as the null resource, it inherently carries no associated cost. When considering the morphisms within the resource theory as transformations that incur no cost, it logically follows that any object that can be derived from the null resource also incurs no cost. The collection of all such objects is referred to as the \textit{free resources}, or \textit{free sub-theory}, while its complement constitute the \textit{costly} or \textit{non-free resources}.
The complete set of resources in some context is also referred to as \textit{enveloping theory}.

As in \cite{wolfe2020quantifying}, in what follows the set of free operation considered will be will be the set of all local operations assisted by shared randomness (LOSR) \cite{PhysRevLett.109.070401}. There are also other possible set of free operations, such as "wirings and prior-to-input classical communication"
(WPICC) \cite{PhysRevA.95.032118}.
The enveloping theory, on the other hand, will be the set of no-signalling behaviours, also called no-signalling boxes. % These boxes mentioned are just a clear visual analogy to represent resources. 
In particular the resources, namely the Bell behaviours, are represented as probabilistic processes (seen as black-boxes) that link the input variables to the output variables, with the restriction that the inputs temporally precede the outputs, following a specific causal structure \cite{fritz2012beyond, branciard2012bilocal}.
There are powerful and general theories of causal structures, in particular the generalised causal models using generalised probability theories (GPT) \cite{janotta2014generalized}. However in the following, it is just needed the notion of \textit{causal structure}, seen as a directed acyclic graph where the nodes represents variables and the edges direct causal connections among them. 
Using the graphical representation of resources and causal structures those situations can be explained in detail. 
To understand how this graphical representation of resources and  causal structure works, consider figure \ref{fig:bipartite_schema}, where two parties are influenced by some common cause. If the specific box can be described by a classical causal model (i.e. it is a local box, whose conditional probability distribution respect Bell inequalities), the common cause is classical and can be represented by a shared randomness. 
The boxes represents probabilistic process from the input wires to the output wires. Single lines describe classical correlations, while double lines describe non-classical (in this case quantum) systems. 
Those classically realisable boxes (i.e. realisable using only LOSR operations) represent the free sub-theory and are described by local Bell behaviours, while the costly common-cause boxes are the no-signalling boxes violating some Bell inequalities.

\section{QKD Bell scenarios} \label{sec:QKD-Bell-scenarios}

Building upon the resource theory foundation, we now shift our attention to applying these concepts to bipartite Quantum Key Distribution (QKD) scenarios. We will explore how resource theory and Bell scenarios connect. We will then add an eavesdropper to the picture and discuss the implications taking into account the resulting non trivial causal structure.

\subsection{Bipartite Bell scenario}

The diagram in figure \ref{fig:bipartite_schema}, is divided into two wings: the left side corresponds to Ashley, the second to Charlie. Causal correlation flows from bottom to top and the boxes at the centre represent a common cause to the wings. 

Let's now examine a general LOSR transformation applied to the resource just described. This general transformation is depicted in the blue section of the diagram: the common cause box represent the shared randomness $\lambda$, generated with probability distribution $q(\lambda)$; the other boxes represent the pre and post processing of Ashley and Charlie variables.

Having discussed each element of the diagram, its meaning is now clearer: a resource (the left orange diagram) is transformed using LOSR operations (the outside blue diagram) into another resource.

Wolfe et al. in \cite{wolfe2020quantifying} proved the following theorem.

\begin{theorem} \label{theorem:convexity} 
The set LOSR transformation for the bipartite Bell scenario with a common cause (diagram \ref{fig:bipartite_schema}) is convex, i.e., if $\tau_0,\tau_1 \in LOSR$, then $\omega \tau_0 + (1-\omega) \tau_1 \in LOSR$ for $0 \leq \omega \leq 1$.
\end{theorem}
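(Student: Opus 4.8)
The plan is to exploit the fact that an LOSR transformation is completely specified by the data $\bigl(q(\lambda),\,\{p_A(a'|a,x',x,\lambda)\},\,\{p_C(c'|c,x''',x'',\lambda)\}\bigr)$: a probability distribution over the shared randomness together with, for each value of $\lambda$, local stochastic pre/post-processing maps on Ashley's and Charlie's wires. So I would first fix an explicit parametrisation of a generic element $\tau\in LOSR$ in these terms, making precise that the action of $\tau$ on an input no-signalling box is an affine (in fact multilinear in the local response functions) expression in this data. Given $\tau_0,\tau_1\in LOSR$ with parameter sets $(q_0,A_0,C_0)$ and $(q_1,A_1,C_1)$, the goal is to produce, for each $0\le\omega\le 1$, a single LOSR transformation $\tau_\omega$ whose action equals $\omega\tau_0+(1-\omega)\tau_1$.

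The key idea is the standard "flag" or "embedding" trick: enlarge the shared-randomness variable to $\Lambda' = (\{0,1\})\sqcup\Lambda_0\sqcup\Lambda_1$, i.e. let the new hidden variable first carry a bit $b$ distributed as $\Pr[b=1]=\omega$, $\Pr[b=0]=1-\omega$, and then, conditioned on $b$, carry a sample of $\lambda$ from $q_b$. Concretely set $q_\omega(b,\lambda) = \bigl(\omega\,[b=1]+(1-\omega)[b=0]\bigr)\,q_b(\lambda)$. Both parties see $b$ (it is part of the shared randomness, hence freely broadcastable under LOSR), so each can condition its local pre- and post-processing on $b$: Ashley uses $A_b$ and Charlie uses $C_b$ when the flag is $b$. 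I would then verify by direct computation that feeding any input box through $\tau_\omega$ defined by $(q_\omega, \{A_b\}, \{C_b\})$ yields $\sum_{b}\Pr[b]\,(\text{action of }\tau_b) = \omega\,\tau_0 + (1-\omega)\tau_1$, using that the $\lambda$-sum inside each branch reproduces exactly the action of $\tau_b$. It remains to check that $\tau_\omega$ is genuinely an LOSR transformation: $q_\omega$ is a valid probability distribution (nonnegativity and normalisation are immediate), and the $b$-dependent local maps are still local stochastic maps on the respective wings — this is where one must be slightly careful that "conditioning the local operation on the shared bit" does not secretly introduce signalling, but since $b$ is common-cause data available before the inputs, locality and the causal ordering (inputs precede outputs) are preserved.

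The main obstacle, such as it is, is not any hard estimate but rather bookkeeping: one must set up the parametrisation of LOSR maps carefully enough that the claim "the action is affine in the distribution over $\lambda$ and the convex combination is realised by the disjoint-union randomness" is manifestly correct, and one must confirm that the enlarged structure still fits the $P3$-type causal diagram of Figure~\ref{fig:bipartite_schema} (no new edges, in particular no edge that would let an input influence the other wing). Once the flag construction is written out, convexity follows immediately, and closure of $LOSR$ under arbitrary finite convex combinations follows by iterating the two-point case.
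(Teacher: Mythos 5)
Your construction is correct and is essentially the argument the paper relies on (via Wolfe et al.): the convex weight $\omega$ is absorbed into the shared randomness as a flag bit that both wings can read, so each party conditions its local pre/post-processing on that bit, which is exactly the ``propagate the same randomness to both parties to select the strategy'' mechanism the paper later invokes when explaining why convexity fails in the $P_3$ scenario. No gaps; the bookkeeping you outline is all that is needed.
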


In addition, it can be showed that the set of convexly extremal transformations are deterministic operations. This proposition represents a slight extension of Fine's argument \cite{fine1982hidden}.
The consequence of this theorem is that the structure of free operations is a \textit{Polytope}, and consequently any local behaviour (i.e. any behaviour inside the polytope) can be described by a convex combination of deterministic operations (i.e. the vertices). Therefore the subsequent analysis will use the vertices of the polytope to generate all accessible behaviours.

\begin{figure}[ht]
    \centering
    \includegraphics[width=10cm]{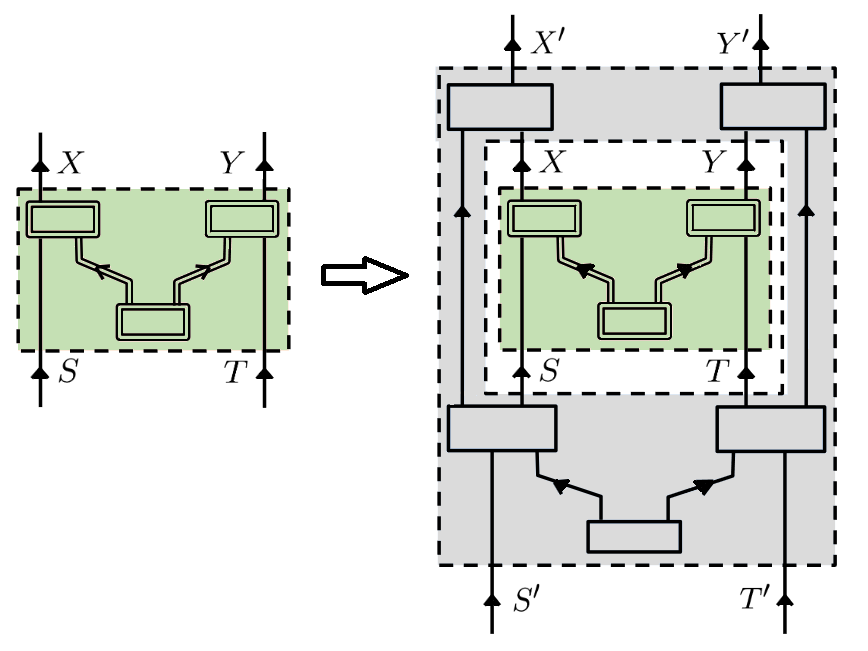}
    \caption{The diagram describe a generic bipartite LOSR transformation, with a direct correspondence with Bell Experiments. The left diagram, described as a conditioned probability $P_{XY|ST}$, is transformed using LOSR operations in the right diagram $P_{X'Y'|S'T'}$.}
    \label{fig:bipartite_schema}
\end{figure}

\subsection{Network scenario}

The use of non trivial causal structures is reflected into the structure of the resource theory: specifically, the convexity (as in theorem \ref{theorem:convexity}) no longer holds in general scenarios.
Such more general scenarios, involving various independent sources and parties arranged in a network, are referred to as Network scenarios \cite{Tavakoli.2022}.

In what follows it will be taken into consideration a particular causal structure, depicted in figure \ref{fig:P3_correlation}, described by the path graph $P_3$ and thus called $P_3$ correlation scenario \cite{fritz2012beyond}, or three-on-line correlation scenario \cite{navascues2020inflation}.

\begin{figure}[ht]
    \centering
    \includegraphics[width=4cm]{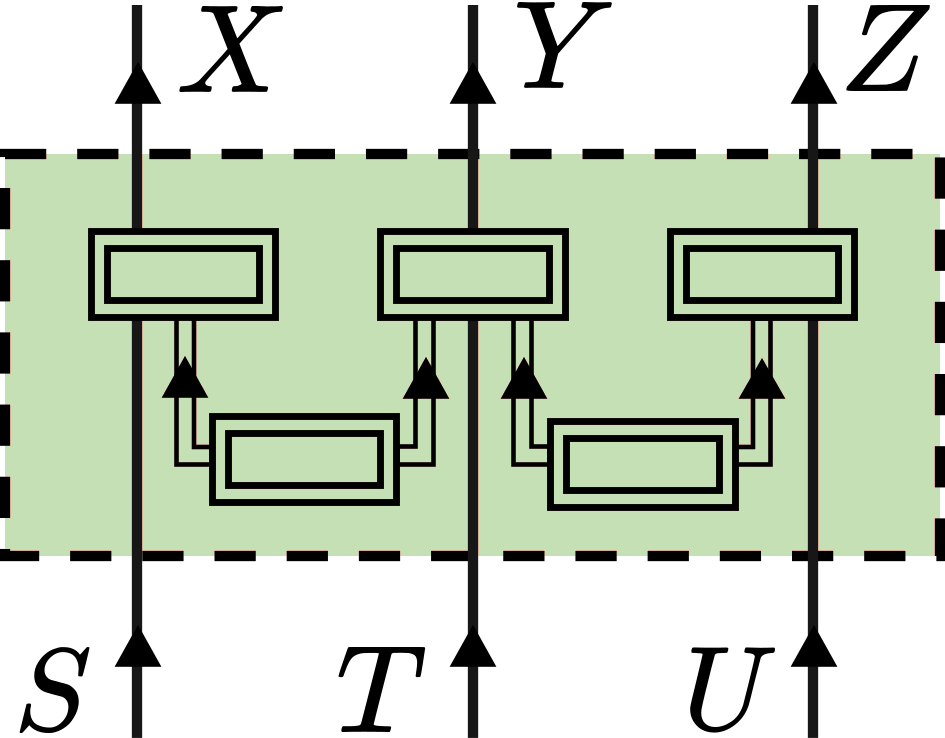}
    \caption{Scenario under consideration: it is a network scenario (3,2,2), three parties and dichotomic setting and output variables. The causal structure induce a non-convexity in the geometrical description of resources. The two common cause boxes are general, i.e. quantum systems, in a classical realisable situation those represent a classical shared randomness among adjacent couples of parties.}
    \label{fig:P3_correlation}
\end{figure}

This particular causal structure describe a Network nonlocality scenario, where three parties have dichotomic setting variables ($S, T, U$) and dichotomic output variables ($X,Y,Z$). The causal structure is described by the path graph $P_3$, which consist of three nodes in a line, connected by 2 edges: the nodes represents the parties, while edges represents common causes among them. In a general scenario the common cause is non classical, but in a classical realisable scenario, the common cause is represented by shared randomness. 
This non trivial causal structure impose a non convexity on the geometrical structure of resources (using LOSR as free operations), viewed as behaviours
using the terminology developed in Bell experiments. 

The framework developed to describe the $P_3$ scenario can be applied to real cases of a quantum communication protocols.
The idea is to gain new insight into known techniques using a novel description. 
Using this scenario we can represent the BB84 protocol \cite{Bennett2014, shor2000simple}, with a slight modification to make it fit into the framework.
It is worth noting that, even if We are not interested in self testing, the description can be seen as a prepare and measure scenario \cite{tavakoli2018self}.

In particular the description of the communication steps is essential in order for BB84 to adhere to the restriction of the scheme analysed.  
As in every protocol of quantum key distribution (QKD), two parties, called Ashley and Charlie, want to obtain a shared secret bit-string through quantum means. In our setting, we must note that LOSR already provides shared random bit-string. However, the secrecy and the randomness of the shared bit-string are not guarantee. During the QKD protocol, the entanglement is converted to a secret and random bit-string
\footnote{Note that LOSR already provides shared-random bit-strings. However, to assure the security of a communication protocol, the randomness and exclusive knowledge of the bit-string must be guaranteed. QKD achieves these two properties elegantly through the postulates of quantum mechanics. Additionally, the shared randomness considered in the resource theory does not have these properties defined or guaranteed.}. 
In this modification of BB84, Ashley create a batch of entangled pair of qubits, for example Bell pairs $|\Phi^+\rangle =\frac{1}{\sqrt{2}}(|00\rangle+|11\rangle)$, and send through a quantum channel one qubit of each pair to Charlie. After this first step, all communication channels must be severed for the measurement step. In the second phase of the protocol, Ashley and Charlie each independently and randomly select a bit ($s_i$ and $u_i$) for every entangled pair $i$. These bits represent the chosen measurement bases for the qubits, between the standard base $\{|0\rangle,|1\rangle\}$ and the Hadamard base $\{|+\rangle,|-\rangle\}$. If both Ashley and Charlie measure the same pair in the standard base there is an equal probability of measuring 0 or 1 and both will obtain the same result. If the measurement is performed in the Hadamard base the situation is exactly the same,
\begin{align}
    |\Phi^+\rangle &=\frac{1}{\sqrt{2}}(|00\rangle+|11\rangle) \nonumber \\ &=\frac{1}{2\sqrt{2}}\Big( (|+\rangle+|-\rangle)(|+\rangle+|-\rangle) + (|+\rangle-|-\rangle)(|+\rangle-|-\rangle) \Big)  \nonumber \\
    & = \frac{1}{\sqrt{2}}(|++\rangle+|--\rangle).
\end{align}

On the other hand if Ashley chooses a measurement base different from Charlie, the results will be completely uncorrelated,
\begin{equation}
    \langle \Phi^+ | X_AZ_C | \Phi^+ \rangle = \langle \Phi^+ | Z_AX_C | \Phi^+ \rangle = 0.
\end{equation}

In the last step of the protocol, after the measurements have been performed, Ashley and Charlie announce publicly the bases used $s_i$ and $u_i$. Only the measure performed casually in the same base will be used to generate a secret shared key, while the others will be used to verify the its security.

The crucial phase of the protocol, the measurement step, can be described by an ensemble of bipartite Bell scenarios with the same causal structure as in diagram \ref{fig:bipartite_schema}.
Using the terminology of Bell experiments, $s_i$ and $u_i$ are the setting variables, while the measurements (dichotomic) are the output variables.

Consider what happens if another character, called Blake, wants to hijack the protocol. A strategy that Blake could follow is a men in the middle attack at the step of quantum communication. He could interpose between the parties and receive the qubit sent by Ashley, sharing a Bell pair with her, and then send a qubit of a Bell pair to Charlie, pretending to be Ashley and sharing another Bell pair with Charlie. This situation is described by the P3 correlation scenario depicted in figure \ref{fig:P3_correlation}.

The language of resource theory and Bell (Network) scenario can be used to describe protocols and scenarios similar to the one just presented. 
%From the specific causal structure and the exact type of resources used further constraint on the description can be imposed. 
For example, the use of the state $| \Phi^+ \rangle$ as quantum common cause produces: $a_1b_1= 1/2$; $a_2b_2= 1/2$; $b_1c_1= 1/2$ and $b_2c_2= 1/2$.
The meaning of the previous qualities is the following: if parties sharing a common cause (A and B or B and C), measure in the same base, the probability to observe something equal is 1, but the precise result is not known. 
Those constraints should actually be verified and not manually imposed, however it can be useful to consider them in theory.

%%%%%%%%%%%%%%%%%%%%%%%%%%%%%%%%%%%%%%%%%%%%%%%%%%%%%%
Using the diagram \ref{fig:P3_correlation} for the description of the general resource in this scenario and diagram \ref{fig:P3_transformation} to express the general LOSR transformation, the non convexity resulting from this choice of causal structure can be proved.

The diagrammatic description gives rise to:

\begin{align} \label{eqn:p3_resources}
    P_{XYZ|STU}(xyz|stu)=& \nonumber \sum_{\lambda_A,\lambda_B,\lambda_C} P_{X|S\Lambda_A}(x|s \lambda_A)P_{Y|T\Lambda_B}(y|t \lambda_B)P_{Z|U\Lambda_C}(z|u \lambda_C) \\ \nonumber & \times P_{\Lambda_A,\Lambda_{B_1}}(\lambda_A,\lambda_{B_1}) P_{\Lambda_{B_2},\Lambda_C}(\lambda_{B_2},\lambda_C) \\
    = & \sum_{\lambda,\lambda '} P_{X|S \Lambda}(x|s \lambda) P_{Y|T \Lambda \Lambda'}(y|t\lambda\lambda') P_{Z|U\Lambda'} (z|u\lambda') P_\Lambda(\lambda) P_{\Lambda '}(\lambda ').
\end{align}

In the final step, it was used $\Lambda_B=(\Lambda_{B_1},\Lambda_{B_2})$ and $\Lambda_A \equiv \Lambda_{B_1}$, $\Lambda_{B_1} \equiv \Lambda_C$. The same structure of the previous equation naturally arise also in the description of the LOSR transformation using the same causal structure. In particular, using the diagram \ref{fig:P3_transformation}, the general transformation, i.e. free operation, is found to be 

\begin{align} \label{eqn:P3_transformations}
    P_{X'Y'Z'STU|XYZS'T'U'}(x'y'z'stu|xyzs't'u')= \nonumber \\
    =\sum_{\lambda,\lambda'} P_{X'S|XS'\Lambda}(x's|xs'\lambda) P_{Y'T|YT'\Lambda \Lambda'} (y't|yt'\lambda \lambda') P_{Z'U|ZU'\Lambda'} (z'u|zu'\lambda') P_\Lambda(\lambda) P_{\Lambda '}(\lambda '),
\end{align}

\noindent this set is not convex: this is caused from the presence of two sources of randomness. Following the steps of the proof \ref{theorem:convexity}, it is not possible anymore to propagate along both external parties the same randomness to influence the choice of a strategy. In particular $A$ and $B$ are not able to use the same shared randomness: this is a crucial observation, it is actually the fundamental reason for the convexity breaking, and it will be useful to characterise geometrically the resulting polytope.

\begin{figure}[ht]
    \centering
    \includegraphics[width=6cm]{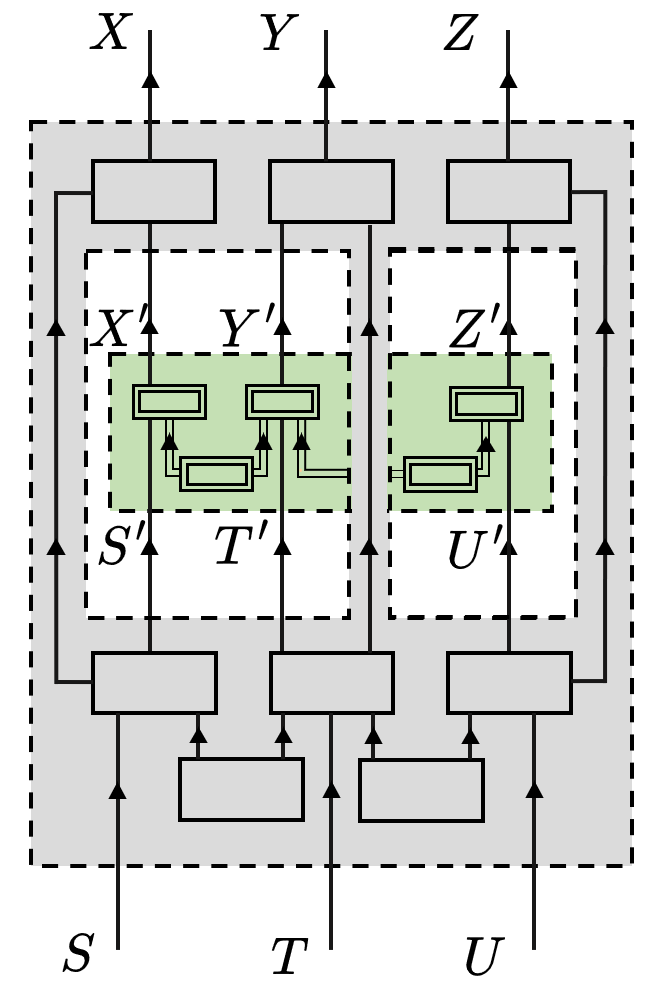}
    \caption{LOSR transformation of the P3 causal structure scenario. In green, a resource with a P3 causal structure, with 3 classical input wires and 3 classical output wires. In grey, the most general LOSR transformation following the same causal structure.}
    \label{fig:P3_transformation}
\end{figure}

\section{Geometrical analysis} \label{sec_Geometry}

\subsection{Geometric description of the QKD protocol}

The main objective of this section is to characterise geometrically the polytope describing the free resource theory of the scenario under study.

The set of resources must not be confused with the set of free transformations. As noted in \cite{wolfe2020quantifying}, the set described in equation \ref{eqn:P3_transformations}, can be transformed in a form equal to equation \ref{eqn:p3_resources} by the change of variable: $\tilde{X}=(X',S)$, $\tilde{S}=(X,S')$, and so on.
However, our focus is to study the free resources and not the transformations among them. Since one of the goals of this work is to find a link between this abstract description and a useful protocol, it will be studied the polytope describing the free resources, existing in the behaviour space of the Bell scenario. It will be possible to extract some information on the transformation among resources analysing the lines (convex transformations) connecting points (resources).

As shown in \cite{fine1982hidden},
the extremal points of the local polytope describe deterministic strategies and therefore enumerating them is a relevant question. A general local behaviour is a linear combination of some deterministic strategies, but the breaking of convexity caused by the causal structure imply that not every linear combination is allowed.  
To enumerate all the deterministic strategies it will be used a technique of tensor decomposition.
%, similar to the description within the GPT framework.
The measurement choice of each party is dictated by the actual values $(s,t,u)$ of the setting variables $(S,T,U)$. To enumerate all the possible deterministic local strategies, it is possible to decompose the arbitrary strategy in the following manner,

\begin{equation} \label{eqn:tensor_decomposition}
    D= \textbf{e}_ {S_ 0,S_ 1}\otimes \textbf{e}_ {T_ 0,T_ 1}\otimes \textbf{e}_{U_ 0,U_ 1}.
\end{equation}

%Similarly to the GPT objects $\textbf{r}^P$, 
The 4 dimensional real vectors $\textbf{e}$ describe a behaviour in the corresponding wing. Each $\textbf{e}_{i,j}$ is an element of the canonical base of $\mathbb{R}^4$; the $i$-th canonical base describe the $i$-th possible deterministic local strategy ordered by concatenating the expected output variables. Taking for example the strategies for the first wing,

\begin{equation}
    \begin{cases}
        \textbf{e}_{0,0}=(1,0,0,0) \qquad \implies \qquad S=0\mapsto X=0, \ S=1 \mapsto X=0 \\
        \textbf{e}_{0,1}=(0,1,0,0) \qquad \implies \qquad S=0\mapsto X=0, \ S=1 \mapsto X=1 \\
        \textbf{e}_{1,0}=(0,0,1,0) \qquad \implies \qquad S=0\mapsto X=1, \ S=1 \mapsto X=0 \\
        \textbf{e}_{1,1}=(0,0,0,1) \qquad \implies \qquad S=0\mapsto X=1, \ S=1 \mapsto X=1
    \end{cases}.
\end{equation}
    
The vectors $\textbf{e}$ are the bases for the general strategies described by $\textbf{r}$. 
A corresponding notation, closer to the Bell experiment description, is in terms of conditional probabilities to observe a fixed result (for example a specific direction, say "up", for a spin measurement). For example $P(X=0|S=0)$ is shortened to $P(X|0)$. For the sake of convenience these conditional probabilities will be referred to as,

\begin{equation} \label{eqn:conditional_probabilities}
    \begin{cases}
        P(X|0)=a_0 \quad , \quad P(X|1)=a_1 \\
        P(Y|0)=b_0 \quad , \quad P(Y|1)=b_1 \\
        P(Z|0)=c_0 \quad , \quad P(Z|1)=c_1 
    \end{cases}.
\end{equation}

In total there are 64 of such deterministic strategies, this is directly deduced from the tensor decomposition \ref{eqn:tensor_decomposition}: there are 3 terms, each with 4 possible choices, and thus a total of $4^3=64$ strategies.
These strategies can be represented directly as base vectors in $\mathbb{R}^{64}$ (from the decomposition \ref{eqn:tensor_decomposition}), or can be represented by restricting the conditional probabilities in equation \ref{eqn:conditional_probabilities} to represent only deterministic behaviours. In this second case, the $64$ strategies are represented by $6$ dimensional vectors of the form

\begin{equation} \label{eqn:det_strategies}
    D=(a_0,a_1,b_0,b_1,c_0,c_1),
\end{equation}

\noindent where each entry, representing a deterministic conditional probability, can be either 0 or 1. To completely characterise the scenario under study more degree of freedom, in particular 26, are needed as prescribed by equation \ref{eqn:dimensions_no_signalling_polytope}. Different parametrizations exist, for example using correlators \cite{bancal2010looking}, but since it was already introduced the vector $D$ composed by the 6 conditional probabilities of the single measurement events, it is straightforward to expand its definition to include all possible composite events.
For convenience events of the form $P(XY|00)$ can be expressed as $a_0 b_0$ and similarly for the other combinations (since the logical AND on the events is mapped to the multiplication on the probabilities). In total, in addition to the 6 single events, there are 12 possible composite events of pairs,

\begin{equation}
    Pairs=(a_0b_0, a_0b_1, a_1b_0, a_1b_1, a_0c_0, a_0c_1, a_1c_0, a_1c_1, b_0c_0, b_0c_1, b_1c_0, b_1c_1)
\end{equation}

and 8 possible events of triplets,

\begin{equation}
    Tris=(a_0b_0c_0, a_0b_0c_1, a_0b_1c_0, a_0b_1c_1, a_1b_0c_0, a_1b_0c_1, a_1b_1c_0, a_1b_1c_1)
\end{equation}
adding up to 26 conditional probabilities as expected. Since there are only 6 free variables, there is a total of 64 of such vectors, representing an extremal point of the polytope under study. Each vertex has the form

\begin{equation} \label{eqn:vertex_structure}
    V=(D, Pairs, Tris),
\end{equation}

\noindent and a list of all the vertices is found in appendix \ref{appendix:A}.
To obtain all possible local strategies, is not possible to find the convex hull of the vertices just found, because the causal structure imply a breaking of the convexity. To characterise this geometrical object let's start by describing where the vertices lies. Since the entries of each vertex can be either 0 or 1, the set of 64 vertices is a subset of the nodes within the hypercube graph $Q_{26}$, obtained considering the edges and nodes of a 26D hypercube. Furthermore, the first 6 entries of the vertices (containing the only free variables) describe a 6D hypercube, that can be embedded into the 26D hypercube by setting all the other 20 entries to 0. This means that the full polytope can be projected down to a 6D hypercube by absorbing the 20 dimensions given by the conditional probabilities of composite events, or conversely, the polytope under study can be created by relocating each vertex of the 6D hypercube to a node within the corresponding 20 dimensional hypercube graph $Q_{20}$ (all 64 $Q_{20}$ graphs live in parallel spaces) expanding from that vertex. 

A first exploratory analysis of this object can be made by clustering the vertices by their hamming weight (the number of entries equal to 1). This division into classes, found in figure \ref{fig:hamming_weight}, amounts to counting vertices in 25D spaces orthogonal to the main diagonal of the 26D hypercube, from vertex $\vec{0}$ to $\vec{1}$. 

Using tensor decomposition of strategies and the steps of the proof for non convexity, we obtain a rule to test if two vertices of the polytope are connected by a line lying inside the polytope or not; in other words we test if they see each other. Borrowing from the computer science terminology of the visibility graph \cite{WELZL1985167}, two vertices will be visible of hidden if there exist a line connecting them lying inside the geometrical shape.

\begin{theorem} \label{theorem:test}
    Given the causal structure P3,  
    let $D= \textbf{e}_ {S_ 0,S_ 1}\otimes \textbf{e}_ {T_ 0,T_ 1}\otimes \textbf{e}_{U_ 0,U_ 1}$ and $D'= \textbf{e}'_ {S_ 0,S_ 1}\otimes \textbf{e}'_ {T_ 0,T_ 1}\otimes \textbf{e}'_{U_ 0,U_ 1}$ be two deterministic local strategies, representing the extremal points of the corresponding local polytope, then the corresponding points in the local polytope are connected by a line lying inside the polytope if and only if $\textbf{e}_ {S_ 0,S_ 1}=\textbf{e}'_ {S_ 0,S_ 1}$ or $\textbf{e}_{U_ 0,U_ 1} =\textbf{e}'_{U_ 0,U_ 1}$.
\end{theorem}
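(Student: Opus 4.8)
The plan is to analyze the convex combination $\omega V + (1-\omega) V'$ of two deterministic vertices and ask when the resulting behaviour still admits a decomposition of the form \eqref{eqn:p3_resources}, i.e. a realization by the $P_3$ causal structure using only two independent sources $\Lambda$ (shared by $A,B$) and $\Lambda'$ (shared by $B,C$). The key structural fact to exploit is the one already highlighted after \eqref{eqn:P3_transformations}: unlike the bipartite case of Theorem~\ref{theorem:convexity}, party $A$ and party $C$ cannot access a common source, so the mixing parameter $\omega$ that $A$ uses to interpolate between its two strategies and the one $C$ uses need not be correlated — but they also cannot be forced to agree. I would first treat the ``if'' direction: suppose $\textbf{e}_{S_0,S_1} = \textbf{e}'_{S_0,S_1}$ (the case $\textbf{e}_{U_0,U_1} = \textbf{e}'_{U_0,U_1}$ is symmetric by relabelling the two ends of the path). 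Then $A$'s response function is identical in both strategies, so $A$ simply plays it deterministically; the only thing that must be interpolated is the pair $(B,C)$, and since $B$ and $C$ do share the source $\Lambda'$, I can put a single bit of shared randomness on $\Lambda'$ that selects, with probability $\omega$, the pair $(\textbf{e}_{T_0,T_1},\textbf{e}_{U_0,U_1})$ and with probability $1-\omega$ the pair $(\textbf{e}'_{T_0,T_1},\textbf{e}'_{U_0,U_1})$. One checks that this reproduces exactly $\omega V + (1-\omega)V'$ on all $26$ coordinates (singles, pairs, triples), because the logical-AND/multiplication bookkeeping of \eqref{eqn:vertex_structure} is respected entry-by-entry once the correlated choice is made. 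This shows the whole segment lies in the polytope.

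For the ``only if'' direction I would argue by contradiction: assume $\textbf{e}_{S_0,S_1}\neq\textbf{e}'_{S_0,S_1}$ and $\textbf{e}_{U_0,U_1}\neq\textbf{e}'_{U_0,U_1}$, and suppose some interior point $\omega V+(1-\omega)V'$ with $0<\omega<1$ is realizable by \eqref{eqn:p3_resources}. The strategy is to look at the $A$–$C$ marginal, i.e. the coordinates $a_s c_u$ together with $a_s$ and $c_u$. Because in \eqref{eqn:p3_resources} $X$ depends only on $(S,\Lambda)$ and $Z$ only on $(U,\Lambda')$ with $\Lambda \perp \Lambda'$, the $A$–$C$ marginal must factorize: $P(XZ|SU) = P(X|S)P(Z|U)$ for every $s,u$ — $A$ and $C$ are uncorrelated in any $P_3$ behaviour. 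On the other hand, on the two vertices $V,V'$ the $A$ and $C$ coordinates take values in $\{0,1\}$, and since the $S$-factors differ there is some setting $s^\*$ with $a_{s^\*}=1$ on $V$ but $a_{s^\*}=0$ on $V'$ (or vice versa), and similarly some $u^\*$ with $c_{u^\*}$ differing; the mixture then has $a_{s^\*}\in\{ \omega, 1-\omega\}\subset(0,1)$ and likewise $c_{u^\*}\in(0,1)$, while the joint coordinate $a_{s^\*}c_{u^\*}$ equals $\omega$ or $1-\omega$ (it is $1$ on exactly one of the two vertices, $0$ on the other). Requiring $a_{s^\*}c_{u^\*} = a_{s^\*}\cdot c_{u^\*}$ forces, e.g., $\omega = \omega^2$, impossible for $0<\omega<1$. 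Hence no interior point of the segment is in the polytope, so the two vertices are hidden.

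I expect the main obstacle to be making the ``factorization of the $A$–$C$ marginal'' step fully rigorous and, dually, verifying in the ``if'' direction that the single-bit correlated construction reproduces \emph{all} $26$ coordinates and not merely the marginals — in particular that the triple coordinates $a_s b_t c_u$ come out right, since those couple all three wings. The clean way to handle both is a small lemma: any behaviour of the form \eqref{eqn:p3_resources} satisfies $P(XZ|SU)=P(X|S)\,P(Z|U)$, and conversely the extreme points reachable by correlated interpolation are exactly those where one end wing is held fixed; once this lemma is in place, the theorem follows by checking, coordinate by coordinate, that the above constructions and obstructions are exhaustive. A secondary subtlety worth a sentence is the symmetry argument reducing ``$\textbf{e}_{U_0,U_1}=\textbf{e}'_{U_0,U_1}$'' to ``$\textbf{e}_{S_0,S_1}=\textbf{e}'_{S_0,S_1}$''; this is legitimate because the path graph $P_3$ is invariant under swapping its two endpoints, and the decomposition \eqref{eqn:p3_resources} is correspondingly symmetric under $(\Lambda, A)\leftrightarrow(\Lambda', C)$.
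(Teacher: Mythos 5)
Your proposal is correct, and its ``if'' half is essentially the paper's own construction: when the two strategies agree on one end wing, that party plays its fixed response deterministically and the selection bit is placed on the source shared by the middle party and the \emph{other} end wing, so the full conditional distribution is the mixture $\omega P_D+(1-\omega)P_{D'}$ and every one of the $26$ coordinates is the corresponding convex combination. Where you genuinely diverge is the ``only if'' half. The paper argues informally, in the spirit of its Fine-style proof of Theorem~\ref{theorem:convexity}, that since the first and third parties share no resource the interpolation parameter cannot be propagated to both ends at once; it never exhibits an explicit obstruction. You instead isolate a structural lemma --- any behaviour of the form \eqref{eqn:p3_resources} has a factorizing $A$--$C$ marginal, $P(XZ|SU)=P(X|S)P(Z|U)$, obtained by summing over $Y$ and using the independence of $\Lambda$ and $\Lambda'$ --- and show that an interior point of the segment violates it whenever both end components differ. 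This is more rigorous than the paper's argument and buys a reusable necessary condition (exactly the $A$--$C$ independence that reappears in the paper's ``uncorrelated manifold'' discussion and the $8$-dimensional reduction), at the cost of a coordinate-level case check that the paper's resource-theoretic phrasing avoids. One small slip in that check: you claim the pair coordinate $a_{s^*}c_{u^*}$ is $1$ on one vertex and $0$ on the other, which only covers the aligned case $(1,1)$ versus $(0,0)$; in the anti-aligned case $(1,0)$ versus $(0,1)$ the pair coordinate vanishes on both vertices, yet the product of the mixed singles is $\omega(1-\omega)\neq 0$, so the contradiction survives. Adding that case (and the symmetric ones) makes your argument complete.
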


\begin{proof}
    Using theorem \ref{theorem:convexity}, the line we refer to can be described as a convex combination of 2 deterministic strategies. Let's consider $\beta_A$ as a binary variable responsible for determining whether the first party, will execute $\textbf{e}_ {S_ 0,S_ 1}$ or $\textbf{e}'_ {S_ 0,S_ 1}$ and, similarly, $\beta_C$ the binary variable responsible for the third party decision. The variable $\beta_A$ is sampled from a distribution $p_{\beta_A}$, where $p_{\beta_A} (0)=\omega_A$ and $p_\beta (1)=1-\omega_A$ and similarly for the third party. $\beta_A$ and $\beta_C$ are shared respectively between the first and the middle party, and the third and the middle party. The local processes executed by each party are dependent upon $\beta_A$ and $\beta_A$, in particular the first party depend only on $\beta_A$, the third party only on $\beta_C$ and the middle party on both. This means that if $\beta_A = 0$, then $\textbf{e}_ {S_ 0,S_ 1}$ is implemented by the first party (similarly for $\beta_A = 1$ and for $\beta_C$). Since there is no shared resource between the first and the third party, $D$ can be transformed trough a convex combination in $D'$, with parameter $\omega_A$ or $\omega_C$, if and only if  $\textbf{e}_ {S_ 0,S_ 1}=\textbf{e}'_ {S_ 0,S_ 1}$ or $\textbf{e}_{U_ 0,U_ 1} =\textbf{e}'_{U_ 0,U_ 1}$.
\end{proof}

\begin{table} 
    \centering  
    \begin{tabular}{|ccc|cc|}
    \hline
    $\textbf{e}_ {S_ 0,S_ 1}$ & $\textbf{e}_ {T_ 0,T_ 1}$ & $\textbf{e}_{U_ 0,U_ 1}$ & \# vertices & status \\ \hline
    $\textbf{e}_ {S_ 0,S_ 1}$ & $\textbf{e}_ {T_ 0,T_ 1}$ & $\textbf{e}_{U_ 0,U_ 1}$ & 1 & coincident \\
    $\textbf{e}_ {S_ 0,S_ 1}$ & $\textbf{e}_ {T_ 0,T_ 1}$ & $x$ & 3 &  visible \\
    $\textbf{e}_ {S_ 0,S_ 1}$ & $x$ & $\textbf{e}_{U_ 0,U_ 1}$ & 3 &  visible \\
    $x$ & $\textbf{e}_ {T_ 0,T_ 1}$ & $\textbf{e}_{U_ 0,U_ 1}$ & 3 & visible \\
    $\textbf{e}_ {S_ 0,S_ 1}$ & $x$ & $x$ & 9 & visible \\
    $x$ & $x$ & $\textbf{e}_{U_ 0,U_ 1}$ & 9 & visible \\
    $x$ & $\textbf{e}_ {T_ 0,T_ 1}$ & $x$ & 9 & hidden \\
    $x$ & $x$ & $x$ & 27 & hidden \\ \hline 
    \end{tabular}
    \caption{From each vertex a total of 27 vertices can be seen, and other 36 are hidden.}
    \label{tab:test}
\end{table}

Table \ref{tab:test} shows how the test described in theorem \ref{theorem:test} works. Specifically, it allows to determine the count of visible vertices and those concealed from an individual vertex. 
To gain a complete understanding of this geometrical shape in principle it could be possible to group the vertices into convex shapes (using the test) and then find their union. 
To find those convex shapes whose union results in the full non convex polytope, the test can be used to find subsets of the vertices, such that all the vertices in the same subset (i.e. convex shape) can "see" all others. However this purely geometrical problem is out of the scope of this paper.
Two related interesting questions, defined considering the graph $\mathcal{G}=(n,e)$, where $n$ is the set of nodes defined by the set of vertices of the polytope under study, and $e$ is the set of edges, defined by the pairs of vertices resulting in a positive test as defined in theorem \ref{theorem:test}, are:
\begin{itemize}
    \item the identification of the minimum number of edges to traverse needed to travel from any vertex to any other;
    \item the minimum vertex cover or the minimum edge cover of the visibility graph, obtained using the test in theorem \ref{theorem:test};
\end{itemize}

Let's start from the first problem, related to the more famous All Pairs Shortest Path (APSP) problem \cite{cormen2022introduction}. In APSP, the objective is to find the shortest path from every possible node to any other. In this case however, we are interested in the maximum length among all those possible shortest paths, since intuitively this information can give an idea of the roughness (amount of "nooks and cranny") of the shape.
Its solution is found in the following theorem.

\begin{theorem} \label{theorem:minimum_hops}
    Given the graph $\mathcal{G}=(n,e)$, where $n$ is the set of nodes defined by the set of vertices of the polytope under study, and $e$ is the set of edges, defined by the pairs of vertices resulting in a positive test as defined in theorem \ref{theorem:test}; the maximum of the all pair shortest path is 2.
\end{theorem}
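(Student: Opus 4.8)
The plan is to exploit the very simple combinatorial form that \Cref{theorem:test} gives to the adjacency relation. Label each of the $64$ vertices by its triple $(\mathbf{e}_{S_0,S_1},\mathbf{e}_{T_0,T_1},\mathbf{e}_{U_0,U_1})$, where each component ranges over the four canonical basis vectors of $\mathbb{R}^4$. By \Cref{theorem:test}, two distinct vertices are adjacent in $\mathcal{G}$ if and only if their first components coincide \emph{or} their third components coincide; the middle component is irrelevant to adjacency. So the whole problem reduces to a statement about triples $(a,b,c)$ with $a,b,c\in\{1,2,3,4\}$, where $(a,b,c)\sim(a',b',c')$ iff $a=a'$ or $c=c'$.

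First I would establish the lower bound, that the maximum shortest-path length is at least $2$: simply exhibit two vertices that are not adjacent, e.g.\ the ones with triples $(\mathbf{e}_{0,0},\mathbf{e}_{0,0},\mathbf{e}_{0,0})$ and $(\mathbf{e}_{0,1},\mathbf{e}_{0,0},\mathbf{e}_{0,1})$ — here the first components differ and the third components differ, so by \Cref{theorem:test} there is no line between them inside the polytope, hence no edge in $\mathcal{G}$; their distance is therefore strictly greater than $1$.

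Next I would prove the upper bound, that any two vertices are at distance at most $2$. Take arbitrary distinct vertices $D=(\mathbf{e}_{S},\mathbf{e}_{T},\mathbf{e}_{U})$ and $D'=(\mathbf{e}'_{S},\mathbf{e}'_{T},\mathbf{e}'_{U})$ (abbreviating the subscripts). If $\mathbf{e}_{S}=\mathbf{e}'_{S}$ or $\mathbf{e}_{U}=\mathbf{e}'_{U}$ they are adjacent and the distance is $1$. Otherwise $\mathbf{e}_{S}\neq\mathbf{e}'_{S}$ and $\mathbf{e}_{U}\neq\mathbf{e}'_{U}$; in that case consider the vertex $D''=(\mathbf{e}_{S},\mathbf{e}_{T},\mathbf{e}'_{U})$ (any choice of middle component works). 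By \Cref{theorem:test}, $D''$ is adjacent to $D$ because their first components agree, and adjacent to $D'$ because their third components agree; moreover $D''\neq D$ since $\mathbf{e}'_{U}\neq\mathbf{e}_{U}$ and $D''\neq D'$ since $\mathbf{e}_{S}\neq\mathbf{e}'_{S}$, so $D$–$D''$–$D'$ is a genuine path of length $2$. Combining the two bounds gives that the maximum over all pairs of the shortest-path length equals $2$.

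There is no serious obstacle here: the content is entirely in \Cref{theorem:test}, and once adjacency is seen to depend only on two of the three coordinates the result is immediate. The only points that require a little care are bookkeeping ones — making sure the intermediate vertex $D''$ is distinct from both endpoints (which is exactly where the hypotheses $\mathbf{e}_{S}\neq\mathbf{e}'_{S}$ and $\mathbf{e}_{U}\neq\mathbf{e}'_{U}$ are used), and separating off the cases of coincident or already-adjacent vertices so that ``maximum'' is correctly interpreted as the diameter of $\mathcal{G}$.
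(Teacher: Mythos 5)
Your proof is correct and follows essentially the same route as the paper: the paper also uses the characterisation from Theorem \ref{theorem:test} (via Table \ref{tab:test}) to note that the $27$ visible vertices are at distance $1$ and that any hidden vertex is reached in two steps through an intermediate vertex sharing $\textbf{e}_{S_0,S_1}$ with the origin and $\textbf{e}_{U_0,U_1}$ with the destination, which is exactly your $D''$. Your write-up is only slightly more explicit in checking that $D''$ differs from both endpoints and in exhibiting a non-adjacent pair for the lower bound.
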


\begin{proof}
    The proof can be deduced by table \ref{tab:test}. In particular, given an arbitrary "origin" vertex, all the 27 visible vertices are connected by an edge by definition of $\mathcal{G}$ and therefore their shortest path length is 1. In addition, every other "destination" vertex among the 36 not incident with the origin can be reached with an additional edge, using a path passing through a visible vertex from the origin sharing with the destination at least one component among $\textbf{e}_{U_ 0,U_ 1}$ and $\textbf{e}_{S_ 0,S_ 1}$. 
\end{proof}

A similar approach can be used for the second question, related to the minimum edge cover problem. This famous mathematical problem is defined as the problem of finding a set containing the minimum possible number of edges such that all vertices are incident to at least one edge in the set. Of course standard algorithms to solve minimum edge cover and minimum vertex cover can be used, but it is interesting to note that an alternative route can be followed. 
The minimum set of vertices, that will be referred to as \textit{generators} from which all other vertices are in sight can be found. 
Since from each vertex there are 27 other vertices reachable, a lower bound on the number of generating vertices needed is 
$\lceil 64 / 28 \rceil=3$. The actual number of generators is 4, as showed in the following theorem.
We define the \textit{visibility graph} of a geometrical shape as the graph having as nodes the vertices of the shape, connected by an edge if and only if the two corresponding vertices in the geometrical shape are connected by a line lying inside it.

\begin{theorem} \label{theorem:vertices_generators}
    The minimum number of generators for the visibility graph of the local strategies in the P3 causal structure is 4.
\end{theorem}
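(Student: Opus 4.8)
The plan is to translate the statement into a small combinatorial covering problem and then prove matching lower and upper bounds by pigeonhole. By the tensor decomposition \eqref{eqn:tensor_decomposition}, each of the $64$ extremal strategies is determined by a triple $(\mathbf{e}_{S_0,S_1},\mathbf{e}_{T_0,T_1},\mathbf{e}_{U_0,U_1})$, with each entry one of the four canonical basis vectors of $\mathbb{R}^4$; so I identify the vertex set of the visibility graph with $\{0,1,2,3\}^3$. Theorem \ref{theorem:test} says two vertices are joined by an edge exactly when their first coordinates agree or their third coordinates agree — the middle coordinate $\mathbf{e}_{T_0,T_1}$ is irrelevant. Hence a single ``generator'' $g=(s,t,u)$ sees precisely the $28$ vertices whose first coordinate is $s$ or whose third coordinate is $u$ (the count recorded in Table \ref{tab:test}), and a generating set $G$ is a set of such vertices whose closed neighbourhoods cover all of $\{0,1,2,3\}^3$. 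Equivalently, thinking of the $64$ vertices as partitioned into a $4\times 4$ array of $(\mathbf{e}_{S_0,S_1},\mathbf{e}_{U_0,U_1})$-classes, a generator covers one full row plus one full column of that array (a ``cross''), and $G$ must cover the whole array by crosses.

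For the lower bound I would argue by pigeonhole on the first and third coordinates. Suppose $|G|=3$. The first coordinates of the elements of $G$ take at most $3$ of the $4$ available values, and likewise the third coordinates; choose a value $a$ occurring as no first coordinate of a generator, and a value $c$ occurring as no third coordinate of a generator. Then the vertex $(a,0,c)$ shares neither its first nor its third coordinate with any $g\in G$, and it is not itself a generator, so it is unseen — a contradiction. This sharpens the naive counting bound $\lceil 64/28\rceil=3$ to $|G|\ge 4$.

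For the upper bound I would exhibit an explicit generating set of size $4$, e.g.\ $G=\{(i,0,0):i\in\{0,1,2,3\}\}$ (the ``diagonal'' set $\{(i,0,i):i\in\{0,1,2,3\}\}$ works equally well): every vertex $(s,t,u)$ has $s\in\{0,1,2,3\}$, hence agrees in its first coordinate with exactly one element of $G$ and is visible from it. Combining the two bounds shows the minimum is exactly $4$.

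The only genuinely delicate step — the one I expect to need the most care in writing — is the reduction itself: one must spell out that the criterion of Theorem \ref{theorem:test} ignores the middle wing entirely, so that the generation problem really does collapse to covering a $4\times4$ array by crosses. Once that is made precise, both directions are short pigeonhole arguments; no running of a vertex-cover or edge-cover algorithm on the $64$-vertex graph is needed, and there is no heavy computation.
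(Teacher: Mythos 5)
Your proposal is correct and follows essentially the same route as the paper: Theorem \ref{theorem:test} collapses the problem to covering the $4\times 4$ array of $(\textbf{e}_{S_0,S_1},\textbf{e}_{U_0,U_1})$-classes by row--column ``crosses'', three crosses cannot cover it (your pigeonhole on unused first/third coordinate values is just the explicit form of the paper's argument), and a diagonal or single-row/column set of four generators achieves the bound, exactly as in Appendix \ref{appendix:B}.
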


\begin{proof}
    The proof is constructive and it gives two different ways to find the generators. The test shows that each vertex can be linked to 27 other vertices by lines lying inside the full polytope under study.  Considering that the number of all vertices is 64, a lower bound on the number of generators is 3 (since the ceiling of $\lceil 64/28 \rceil=3$). 

    As a second step, this lower bound is improved to 4. 
    For the sake of brevity, the deterministic strategies in equation \ref{eqn:det_strategies} will be denoted as a triplet $(\alpha_i,\beta_j,\gamma_k)$ where $i,j,k \in \{0,1,2,3\}$. Since the test in theorem \ref{theorem:test} amounts to consider if $i$ or $k$ change, independently from $j$, the vertices can be divided into 16 classes $(\alpha_i,\gamma_k)$, representing a 4 by 4 matrix. Given a vertex, and therefore an element of this matrix, the test marks as visible the vertices in the same row and in the same column of the generator. Therefore, it is impossible to cover all the elements of this matrix with less than 4 generators and the lower bound on the minimum number of generators is 4.
    This is also the exact number, since there are explicit construction involving 4 generators, considering for example all the diagonal elements $(\alpha_i,\gamma_i)$ or just a single column or row. In appendix \ref{appendix:B} can be found those explicit constructions.
\end{proof}

\begin{figure}
    \centering
    \includegraphics[width=10cm]{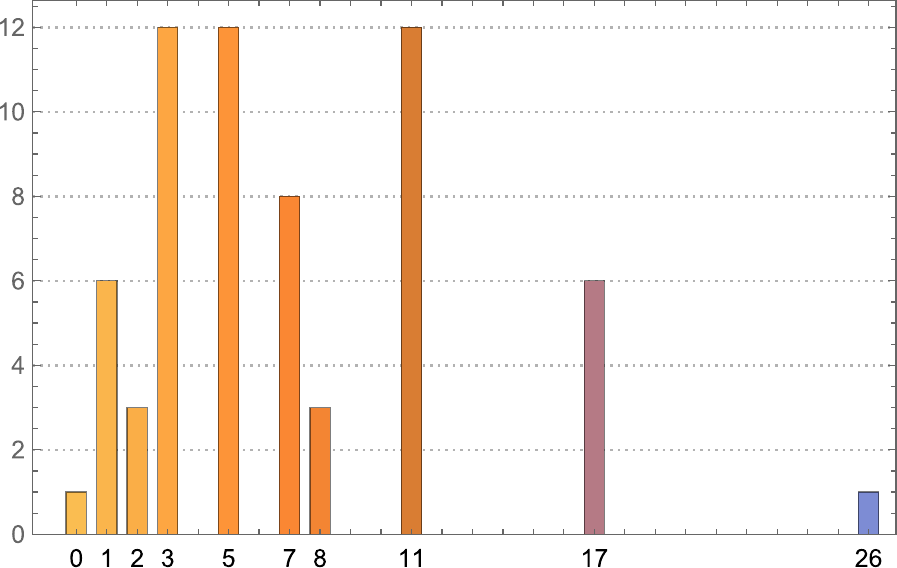}
    \caption{Clustering of vertices by their hamming weight. This division into classes, amounts to counting the number of vertices in spaces orthogonal to the 26D hypercube main diagonal.}
    \label{fig:hamming_weight}
\end{figure}

\section{Extracting nonclassicality} \label{sec:nonclassicality}

Translating resource theoretical descriptions into practical protocols requires understanding the information obtainable from actual physical measurements. A key point is that any specific implementation of a Bell experiment, or more in general Network Scenario, corresponds to a single point in the behaviour space. Ideally, experimental data would allow us to pinpoint the exact location of this point within the space, fully characterising the experiment.

In reality, experimental errors are unavoidable. Factors like imperfect measurement devices, environmental noise, and limitations in preparing entangled states contribute to deviations from the ideal scenario.  Consequently, instead of a single point, the experimental data defines a region in the behaviour space. The size of this region reflects the magnitude of the combined errors. The coordinates of the centroid of this region are determined by the measured correlations between the different measurement outcomes. These correlations are calculated by averaging over many repetitions of the experiment, using an ensemble of identically prepared systems (e.g., entangled particle pairs).

Consider a scenario where two parties, Ashley and Charlie, each perform measurements on their respective parts of the entangled system. Each time they choose a specific measurement setting, they effectively fix half of the variables defining a point in the behaviour space.  Therefore, a single measurement run provides information about an entire hyperplane within the space. By repeating the experiment with different measurement settings, we obtain multiple such hyperplanes, each with an associated uncertainty due to experimental error.  The final representation of the experiment in the behaviour space is given by the intersection of these hyperplanes and their corresponding error regions.

The goal is to be sure that the point found is outside the non convex local polytope described previously, proving that no classically realisable boxes could have generated the observed behaviour.
What just stated is only partially correct, since no one will ever have access to Blake's measurements or strategies. Therefore the same considerations must be projected to a smaller space obtained by marginalising Blake's variables.
The analogue of equation \ref{eqn:vertex_structure}, defining the vertices structure is,

\begin{equation} \label{eqn:8dvertex_structure}
    V = (a_0,a_1,c_0,c_1,a_0c_0,a_0c_1,a_1c_0,a_1c_1) .
\end{equation}

The explicit list of all $2^4=16$ vertices is found in appendix \ref{appendix:A}. This dimensionality reduction causes a degeneration of groups of four vertices in 26 dimensions into one in 8, corresponding to all deterministic strategies that differ only for the choice of Blake (i.e. 4 possible deterministic strategies).
The convexity test, to check if two vertices can see each other or not, is directly translated in the reduction. 
The same construction can be followed to reach similar geometrical conclusions for the reduced polytope in 8 dimensions, as reported in appendix \ref{appendix:B}.

Apart from all the geometrical reasoning, interesting per se, it makes sense to search for a method to exploit this geometrical interpretation to identify any potential protocol tampering resulting from eavesdropping or noise interference.
An useful observation is that equation \ref{eqn:8dvertex_structure} actually describe a 4 dimensional manifold passing for all vertices. Every point on the manifold describe a complete absence of correlations since, the probability of the composite events are just the product of probabilities of single events. The further a configuration is from this surface, the more complex the underlying correlations are. It is also easy to compute the expected point describing the correlations given the exact resources shared. For example, consider as before the Bell state  $| \Phi^+ \rangle$, acting as a common cause between Ashley and Charlie. They do not know of the presence of Blake, therefore they believe to be in the configuration depicted in figure \ref{fig:bipartite_schema}.
In that case the variables describing the single events would be $a_0=a_1=c_0=c_1=1/2$ while the only composite events to change would be $a_0c_0=1/2$ and $a_1c_2=1/2$ due to entanglement. The corresponding point on the uncorrelated manifold would have all composite events equals to $a_0c_0=a_0c_1=a_1c_0=a_1c_1=1/4$. It can be found the minimum distance of the expected point from the uncorrelated surface (using numerical methods or Lagrange multipliers) to gauge the effect of noise. In this particular case, the minimum distance from the point to the surface is about $0.92$ and the if the single event probabilities  are all equal approximately to $0.564$, it is found the closest point on the surface.

Using this distance as normalisation, it can be gauged the effect of noise. 
Moreover, a hypothesis test can be conducted to ascertain the level of confidence at which one can assert that the measured point and the expected point are indeed the same.

\begin{figure}
    \centering
    \includegraphics[width=\textwidth]{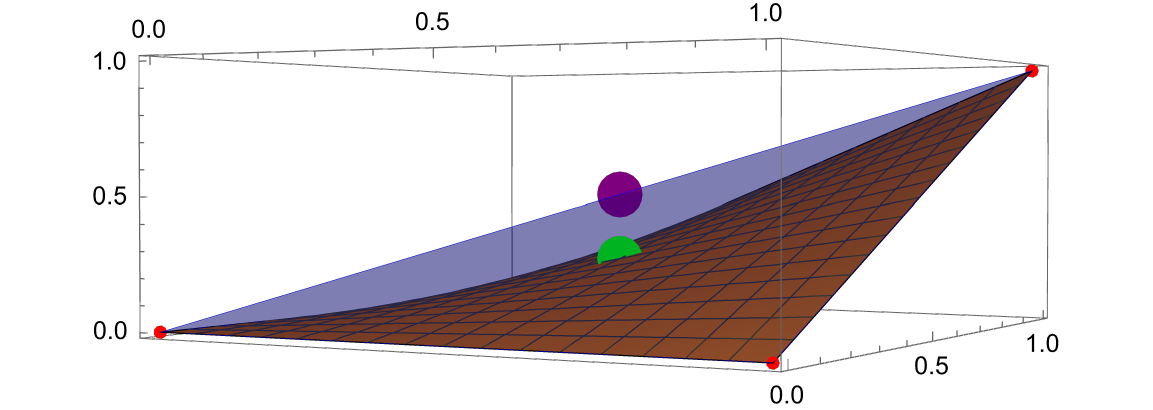}
    \caption{Example in low dimensionality of the techniques that can be employed to describe the protocol. In red are indicated local deterministic strategies, in orange the "uncorrelated manifold". The two blobs represent possible points and the associated error, describing the protocol: the purple sphere is centred on the theoretical point representing the correlation stemming from the actual resource used (Bell pair); in green there is the point that would be obtained in the absence of any correlations. It's worth mentioning that the nearest point on the surface to the purple point, is not the green one.}
    \label{fig:blob}
\end{figure}

\subsection{Numerical example}

To explain more in detail how the test works, let's consider the information available in theory to Ashley and Charlie: $a_0, a_1, c_0, c1$ (using the notation introduced previously). Consider two clearly distinguishable situations: the correlations measured if a Bell state is shared, and the correlation obtained if two uncorrelated states are shared. In the first case the multiple correlation are $a_0c_0=a_1c_1=1/2$, while in the second (uncorrelated) case, the pairwise correlations are just the products of the probabilities of the single events $a_0c_0=a_1c_0=a_0c_1=a_1c_1=1/4$. The idea of the test is to consider the distance in the 8d dimension space between those 2 points with the experimental errors and perform a separability test. The two points will be called $P_u$ (uncorrelated) and $P_b$ (Bell pair). 
To continue with the example let's find a reasonable error to associate to the expected point $P_b$. A new simulated noisy point  $\widetilde{P}_b$, is found by adding a Gaussian error with standard deviation of 5\% on each component. The distance between the ideal point and the one obtained altering each component by exactly 5\%, is taken as the standard deviation of the Gaussian distribution describing the error on the distance between $P_u$ and $\widetilde{P}_b$. In this way standard 2 Gaussian separability test can be performed. The same can be made only adding noise on $P_u$, obtaining $\widetilde{P}_u$. The two situation are described in figure \ref{fig:gaussian_test}.

\begin{figure}[H]
    \centering
    \begin{subfigure}[b]{0.45\textwidth}
    \includegraphics[width=\textwidth]{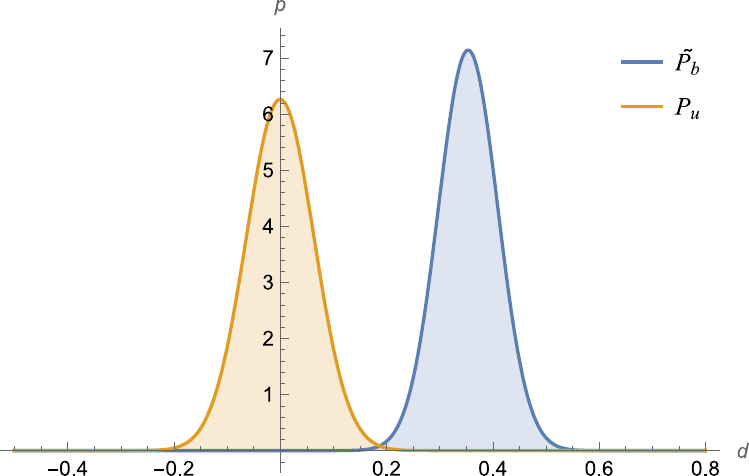}
    \end{subfigure}
    \hfill
    \begin{subfigure}[b]{0.45\textwidth}
    \includegraphics[width=\textwidth]{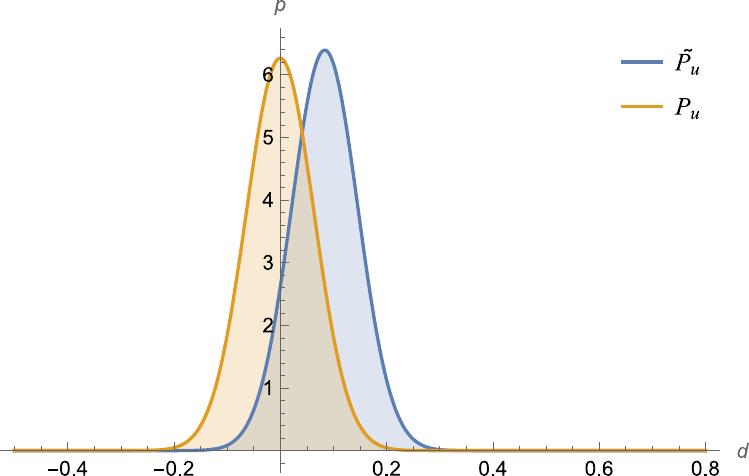}
    \end{subfigure}
    \caption{In orange, centred on 0, is the point $P_u$ and the distribution representing its error along the line connecting with the second point. On the left figure, in blue, the second point is the noisy $\widetilde{P_b}$, describing the case obtained with a shared Bell pair; on the right, in blue, the second point is the noisy $\widetilde{P_u}$, obtaining by adding noise to the first point. Computing the overlapping, the area common to the two Gaussians, or performing a separability test, confirm that in the first case the 2 points are indeed different, while in the second case (depending on the significance level) the hypothesis test do not allows to separate them.}
    \label{fig:gaussian_test}
\end{figure}

In an experiment, the same steps can be adapted by considering the available information and the experimental precision and measured quantities. In addition, it could be preferable to execute the same test on samples instead than recovering the parameters of the error distributions. In this case, standard hypothesis testing techniques such as a Two-Sample T-Test or a Two-sample Kolmogorov-Smirnov test \cite{pratt1981kolmogorov, cressie1986use} can be used.

\subsection{Certifying nonclassicality of behaviours}

We would like to better understand the relation between the geometrical distance in the behaviour space and some known metrics used in quantum information theory.
Consider the expected and observed behaviour, defined as points $P$ and $Q$ in behaviour space as in equation \eqref{eqn:8dvertex_structure}. 
Consider the difference,
\begin{equation}
    V=P-Q.
\end{equation}
The norm of this vector is what is used in our statistical test to understand if the two behaviours are coincident or not. As mentioned, in principle it could used any norm. A standard example is the class of the $L_p$ norms defined as,

\begin{equation}
    \|x\|_p = \left(|x_1|^p + |x_2|^p + \dotsb + |x_n|^p\right)^{1/p},
\end{equation}
\noindent where $x\in \mathbb{R}^n$ and $p\geq 1$.
For future reference, it is useful to note that, for vector with entries $x_i\in [0,1]$, the following is true,

\begin{equation} \label{eq:L2L1relation}
    \sum_i |x_i|^2 \leq \sum_i |x_i| \implies  ||x||_2\leq ||x||_1.
\end{equation}

To precisely characterise the observed behaviour, $N$ runs are repeated and the observed point is obtained as the average of behaviours, with their standard deviation. 
To start the investigation, consider that the elements of the vector characterising the observed behaviour at run $k$ are conditional probabilities of the form,

\begin{equation}
    P(ab|xy)=\Tr \left[ M^x \otimes M^y \sigma_k \right] = \Tr \left[  \Pi \sigma_k \right],
\end{equation}

\noindent where $a,b$ are the measurement outcomes, $x,y$ the choice of measurement basis and the measurement operator, for clarity of notation, was indicated as a projector $\Pi$.  
We also know that,

\begin{equation}\label{eq:NeilsChuang9.27}
    \left| \Tr (\mathcal{O} \rho) \right| \leq \Tr (\mathcal{O} |\rho|),
\end{equation}
\noindent for a generic density matrix $\rho$ and a bounded positive semi-definite Hermitian operator $\mathcal{O}$, as shown in \cite{Nielsen_Chuang_2010}. This allows us to consider a single element of $V$ and bound it. In particular, denoting the state of the system at run $k$ by $\sigma_k$, and by $\rho=\ket{\psi}\bra{\psi}$, the ideal and pure expected state,

\begin{equation}
    \left|\frac{1}{N}\sum_k \Tr (\Pi \sigma_k) - \Tr (\Pi \rho) \right| \leq \Tr (\Pi |\sigma - \rho |),
\end{equation}

\noindent where $\sigma$ is the average density matrix, the inequality follows from equation \eqref{eq:NeilsChuang9.27}, and we follow the standard i.i.d. assumptions. 

Using the previous observation we want to bound the geometrical distance between behaviours, i.e. the $L_2$ norm of $V$. In \cite{Nielsen_Chuang_2010} it was shown that, given two states, for any POVM  $\{E_i\}_i$ the classical trace distance $D$ (also called the total variation distance) between the probability distributions obtained by measuring $p_m=\Tr (E_m \rho)$, 
$q_m=\Tr (E_m \sigma)$, is bounded by the trace distance $\delta$ of between the density matrices,

\begin{align}
    D(p_m,q_m)&=\frac{1}{2}\sum_m \left| p_m - q_m\right| = \frac{1}{2}\sum_m |\Tr \left( E_m (\rho - \sigma)\right)\\ &\leq \frac{1}{2}\sum_m \Tr 
    \left[ E_m \left| \rho - \sigma \right| \right] =\delta (\rho,\sigma).
\end{align}

We then introduce $p_i=\Tr(\Pi_i \rho)$ and $q_i=\Tr(\Pi_i \sigma)$, the classical distributions obtained by measuring the states. 
We also consider the reduced density matrices obtained by tracing out the two parties: $\sigma_A, \sigma_B, \rho_A, \rho_B$. Consider the norm $||V||_1$, it can then be written as a sum containing the measurement operators. Since each term is positive, we can bound the sum adding the remaining operators to complete three set of POVM, $I=\{\Pi_i\}_i$, $J=\{\Pi_j\}_j$, $L=\{\Pi_l\}_l$, for measuring the complete system and each subsystem,
\begin{align}
    || V ||_1 
    &= \sum_i \left| \Tr (\Pi_i^A (\rho_A - \sigma_A)) \right| + \sum_j \left| \Tr (\Pi_j^B (\rho_B - \sigma_B)) \right| + \sum_l \left| \Tr (\Pi_l^{AB} (\rho_{AB} - \sigma_{AB})) \right| \\
    &\leq \sum_{i \in I} \left| \Tr (\Pi_i^A (\rho_A - \sigma_A)) \right| + \sum_{j \in J} \left| \Tr (\Pi_j^B (\rho_B - \sigma_B)) \right| + \sum_{l \in L} \left| \Tr (\Pi_l^{AB} (\rho_{AB} - \sigma_{AB})) \right| \\
    &\leq \sum_{i \in I} \Tr \left[\Pi_i^A \left|\rho_A - \sigma_A\right| \right] + \sum_{j \in J} \Tr \left[\Pi_j^B \left|\rho_B - \sigma_B\right| \right] + \sum_{l \in L} \Tr \left[\Pi_l^A \left|\rho_{AB} - \sigma_{AB}\right| \right] \\
    &\leq 2 \left[ \delta(\rho_A, \sigma_A) + \delta(\rho_B, \sigma_B) + \delta(\rho_{AB}, \sigma_{AB}) \right] . \label{eq:L1<traces}
\end{align}

Consider now Euclidean distance between the behaviours, $||V||_2$.
Each element, being a probability, is between $0$ and $1$. Therefore, combining equations \eqref{eq:L2L1relation} and \eqref{eq:L1<traces} it holds,

\begin{equation}
    || V ||_2 \leq || V ||_1 \leq 2 \left[ \delta(\rho_A, \sigma_A) + \delta(\rho_B, \sigma_B) + \delta(\rho_{AB}, \sigma_{AB}) \right].
\end{equation}

\noindent  This inequality explain the relation between the geometrical distance in the behaviour space and the trace distance between the expected and observed states. It is worth noting that the trace distance is related to the probability of distinguishing between two quantum states, therefore, intuitively, it makes sense that it appears in the statistical test. Also, similar techniques were employed in \cite{PhysRevA.80.062327, PhysRevA.97.022111}.

This result support the previous geometric analysis as the combination of trace distances is bound from below by the euclidean distance between behaviours (and more strictly by $||V||_1$). The connection to fidelity $F$ can be recovered by the well known bounds,

\begin{equation}
    1-\sqrt{F(\rho,\sigma)} \le T(\rho,\sigma) \le\sqrt{1-F(\rho,\sigma)}.
\end{equation}

This proof, as well as the previous discussion and the statistical test, do not depend on the actual states used during the protocol. Therefore the techniques presented here can be useful in more general scenarios than BB84, employing non standard states (such as Bell or GHZ states commonly used) and sharing only the causal structure. Since this computation do not use the causal structure, similar statistical tests can be devised for the causal structure under consideration. 
In addition, this structure works in purely classical scenarios, as those can be recovered from density matrix formalism.

\section{Discussion and conclusion}

In contrast to our proposal to use a statistical test, another line of reasoning to obtain a unit of nonclassicality can be taken. Suppose all the relevant data on the protocol are available at the end of the run, in particular all the setting variables and measurement outcomes relative to both parties. 
Using the data on the measurements performed, a quantum tomography \cite{christandl2012reliable} can be performed. The reconstructed channel can then be compared to the expected one using fidelity other measures, such as the Bures distance or Bures arc \cite{bures1969extension}. 
On the other hand our approach is inspired by statistical techniques for quantifying nonclassicality \cite{reviewStatisticalMethodsQuantumVerification}, and the need for a sampling in both methods highlight a connection.
Establishing a precise correlation between statistical quantities derived from the test we propose and quantum information quantities derived from the applying quantum tomography depends on the protocol and falls beyond the scope of this discussion.

%%%%%%%%%%

Our contributions aim to find a contact point between resource theory, describing Network scenarios, and QKD. From this union it is possible to find a new description of QKD protocols, where the effects of noise or the presence of an eavesdropper are encoded in a geometrical manner. Along the way we also analyse a non convex shape that naturally arise in this context. Intriguingly, this approach enables the estimation of process fidelity for the reconstructed channel, alongside other fidelity-derived metrics such as the Bures distance, which similarly serve as measures of nonclassicality.
Note that these comparative analyses illuminate both the differences and similarities between the two methodologies, thereby connecting our proposal to established techniques in Quantum Key Distribution (QKD).

Since the techniques discussed are not directly dependent on the specific quantum states employed in the protocol, we emphasise the broad applicability of them. Consequently, these methods transcend the BB84 protocol and are applicable to scenarios involving non-standard states, such as those beyond Bell or GHZ states, provided they adhere to the same causal structure. Moreover, since the computation does not depend on the causal structure, similar statistical tests can be devised for other causal structures under investigation.

This study employed the terminology of resource theory to analyse scenarios similar to those encountered in typical QKD tasks. During the analysis, certain geometrical problems emerged, suggesting that future research could provide a more thorough characterisation of the non-convex polytope under investigation. On the QKD side, the proposed test is a theoretical experiment designed to link a practical problem with its abstract representation in behaviour space.

In conclusion, this work prompt further studies also on the geometrical part, both on the full and the reduced polytope. The goal on this new investigation could be trying to find the facet equations, as well as a general description, of the non convex geometrical shape. It can also be interesting to understand, from an information-theoretical standpoint,  the significance of behaviours in the region between the standard convex hull of the extremal strategies, and the non convex geometry resulting from the application of the causal structures.

\paragraph{Acknowledgments.}  We thank Rafael Wagner and Yoann Piétri for helpful discussions. MdO is supported by National Funds through the FCT - Fundação para a Ciência e a Tecnologia, I.P. (Portuguese Foundation for Science and Technology) within the project IBEX, with reference PTDC/CCI-COM/4280/2021, and via CEECINST/00062/2018 (EFG). Andrea D’Urbano acknowledges the funding received by Deep Consulting
s.r.l. within the Ph.D. program.

%%%%%%%%%%%%%%%%%%%%%%%%%%%%%%%%%%%%%%%%%%%%%%%%%%%

\bibliographystyle{unsrt}
\bibliography{refS}

\clearpage
\appendix

\section{Appendix: Vertices list} \label{appendix:A}

In table \ref{tab:vertices} are reported the vertices of the polytope describing the correlation scenario P3. The entries are divided to highlight the free variables describing single events, and the composite events of pairs and triplets. The convention used is described from equation \ref{eqn:det_strategies} to equation \ref{eqn:vertex_structure}. In table \ref{tab:8d_vertices} are listed the vertices obtained through the marginalisation of the knowledge of the middle party, Blake. The structure of the vertices is reported is equation \ref{eqn:8dvertex_structure}.

To have a better idea of the structure explored it can be constructed a graph by using the vertices as nodes; two nodes are then connected by an edge if and only if the corresponding vertices can see each others. 
To have a visual representation of such graphs, a projection in 3 dimension can be used, and in order to maximise the distinguishability of features the first three components of the singular value decomposition were used. As a side note, observe that for numerical stability, it was added a shift to the zero vector representing the first vertex.
The corresponding graphs are depicted in figure \ref{fig:graph_vertices} for both polytopes.

\begin{table}
    \centering
    \begin{adjustbox}{width=0.85\textwidth}   
    {\small
    \begin{tabular}{|cccccc|cccccccccccc|cccccccc|} 
        \hline
        \multicolumn{26}{|c|}{Vertices} \\
        \hline
        \multicolumn{6}{|c|}{Single events} & \multicolumn{12}{c|}{Pairs events} & \multicolumn{8}{c|}{Triplets events} \\
        \hline
0 & 0 & 0 & 0 & 0 & 0 & 0 & 0 & 0 & 0 & 0 & 0 & 0 & 0 & 0 & 0 & 0 & 0 & 0 & 0 & 0 & 0 & 0 & 0 & 0 & 0 \\ \hline
0 & 0 & 0 & 0 & 0 & 1 & 0 & 0 & 0 & 0 & 0 & 0 & 0 & 0 & 0 & 0 & 0 & 0 & 0 & 0 & 0 & 0 & 0 & 0 & 0 & 0 \\ \hline
0 & 0 & 0 & 0 & 1 & 0 & 0 & 0 & 0 & 0 & 0 & 0 & 0 & 0 & 0 & 0 & 0 & 0 & 0 & 0 & 0 & 0 & 0 & 0 & 0 & 0 \\ \hline
0 & 0 & 0 & 0 & 1 & 1 & 0 & 0 & 0 & 0 & 0 & 0 & 0 & 0 & 0 & 0 & 0 & 0 & 0 & 0 & 0 & 0 & 0 & 0 & 0 & 0 \\ \hline
0 & 0 & 0 & 1 & 0 & 0 & 0 & 0 & 0 & 0 & 0 & 0 & 0 & 0 & 0 & 0 & 0 & 0 & 0 & 0 & 0 & 0 & 0 & 0 & 0 & 0 \\ \hline
0 & 0 & 0 & 1 & 0 & 1 & 0 & 0 & 0 & 0 & 0 & 0 & 0 & 0 & 0 & 0 & 0 & 1 & 0 & 0 & 0 & 0 & 0 & 0 & 0 & 0 \\ \hline
0 & 0 & 0 & 1 & 1 & 0 & 0 & 0 & 0 & 0 & 0 & 0 & 0 & 0 & 0 & 0 & 1 & 0 & 0 & 0 & 0 & 0 & 0 & 0 & 0 & 0 \\ \hline
0 & 0 & 0 & 1 & 1 & 1 & 0 & 0 & 0 & 0 & 0 & 0 & 0 & 0 & 0 & 0 & 1 & 1 & 0 & 0 & 0 & 0 & 0 & 0 & 0 & 0 \\ \hline
0 & 0 & 1 & 0 & 0 & 0 & 0 & 0 & 0 & 0 & 0 & 0 & 0 & 0 & 0 & 0 & 0 & 0 & 0 & 0 & 0 & 0 & 0 & 0 & 0 & 0 \\ \hline
0 & 0 & 1 & 0 & 0 & 1 & 0 & 0 & 0 & 0 & 0 & 0 & 0 & 0 & 0 & 1 & 0 & 0 & 0 & 0 & 0 & 0 & 0 & 0 & 0 & 0 \\ \hline
0 & 0 & 1 & 0 & 1 & 0 & 0 & 0 & 0 & 0 & 0 & 0 & 0 & 0 & 1 & 0 & 0 & 0 & 0 & 0 & 0 & 0 & 0 & 0 & 0 & 0 \\ \hline
0 & 0 & 1 & 0 & 1 & 1 & 0 & 0 & 0 & 0 & 0 & 0 & 0 & 0 & 1 & 1 & 0 & 0 & 0 & 0 & 0 & 0 & 0 & 0 & 0 & 0 \\ \hline
0 & 0 & 1 & 1 & 0 & 0 & 0 & 0 & 0 & 0 & 0 & 0 & 0 & 0 & 0 & 0 & 0 & 0 & 0 & 0 & 0 & 0 & 0 & 0 & 0 & 0 \\ \hline
0 & 0 & 1 & 1 & 0 & 1 & 0 & 0 & 0 & 0 & 0 & 0 & 0 & 0 & 0 & 1 & 0 & 1 & 0 & 0 & 0 & 0 & 0 & 0 & 0 & 0 \\ \hline
0 & 0 & 1 & 1 & 1 & 0 & 0 & 0 & 0 & 0 & 0 & 0 & 0 & 0 & 1 & 0 & 1 & 0 & 0 & 0 & 0 & 0 & 0 & 0 & 0 & 0 \\ \hline
0 & 0 & 1 & 1 & 1 & 1 & 0 & 0 & 0 & 0 & 0 & 0 & 0 & 0 & 1 & 1 & 1 & 1 & 0 & 0 & 0 & 0 & 0 & 0 & 0 & 0 \\ \hline
0 & 1 & 0 & 0 & 0 & 0 & 0 & 0 & 0 & 0 & 0 & 0 & 0 & 0 & 0 & 0 & 0 & 0 & 0 & 0 & 0 & 0 & 0 & 0 & 0 & 0 \\ \hline
0 & 1 & 0 & 0 & 0 & 1 & 0 & 0 & 0 & 0 & 0 & 0 & 0 & 1 & 0 & 0 & 0 & 0 & 0 & 0 & 0 & 0 & 0 & 0 & 0 & 0 \\ \hline
0 & 1 & 0 & 0 & 1 & 0 & 0 & 0 & 0 & 0 & 0 & 0 & 1 & 0 & 0 & 0 & 0 & 0 & 0 & 0 & 0 & 0 & 0 & 0 & 0 & 0 \\ \hline
0 & 1 & 0 & 0 & 1 & 1 & 0 & 0 & 0 & 0 & 0 & 0 & 1 & 1 & 0 & 0 & 0 & 0 & 0 & 0 & 0 & 0 & 0 & 0 & 0 & 0 \\ \hline
0 & 1 & 0 & 1 & 0 & 0 & 0 & 0 & 0 & 1 & 0 & 0 & 0 & 0 & 0 & 0 & 0 & 0 & 0 & 0 & 0 & 0 & 0 & 0 & 0 & 0 \\ \hline
0 & 1 & 0 & 1 & 0 & 1 & 0 & 0 & 0 & 1 & 0 & 0 & 0 & 1 & 0 & 0 & 0 & 1 & 0 & 0 & 0 & 0 & 0 & 0 & 0 & 1 \\ \hline
0 & 1 & 0 & 1 & 1 & 0 & 0 & 0 & 0 & 1 & 0 & 0 & 1 & 0 & 0 & 0 & 1 & 0 & 0 & 0 & 0 & 0 & 0 & 0 & 1 & 0 \\ \hline
0 & 1 & 0 & 1 & 1 & 1 & 0 & 0 & 0 & 1 & 0 & 0 & 1 & 1 & 0 & 0 & 1 & 1 & 0 & 0 & 0 & 0 & 0 & 0 & 1 & 1 \\ \hline
0 & 1 & 1 & 0 & 0 & 0 & 0 & 0 & 1 & 0 & 0 & 0 & 0 & 0 & 0 & 0 & 0 & 0 & 0 & 0 & 0 & 0 & 0 & 0 & 0 & 0 \\ \hline
0 & 1 & 1 & 0 & 0 & 1 & 0 & 0 & 1 & 0 & 0 & 0 & 0 & 1 & 0 & 1 & 0 & 0 & 0 & 0 & 0 & 0 & 0 & 1 & 0 & 0 \\ \hline
0 & 1 & 1 & 0 & 1 & 0 & 0 & 0 & 1 & 0 & 0 & 0 & 1 & 0 & 1 & 0 & 0 & 0 & 0 & 0 & 0 & 0 & 1 & 0 & 0 & 0 \\ \hline
0 & 1 & 1 & 0 & 1 & 1 & 0 & 0 & 1 & 0 & 0 & 0 & 1 & 1 & 1 & 1 & 0 & 0 & 0 & 0 & 0 & 0 & 1 & 1 & 0 & 0 \\ \hline
0 & 1 & 1 & 1 & 0 & 0 & 0 & 0 & 1 & 1 & 0 & 0 & 0 & 0 & 0 & 0 & 0 & 0 & 0 & 0 & 0 & 0 & 0 & 0 & 0 & 0 \\ \hline
0 & 1 & 1 & 1 & 0 & 1 & 0 & 0 & 1 & 1 & 0 & 0 & 0 & 1 & 0 & 1 & 0 & 1 & 0 & 0 & 0 & 0 & 0 & 1 & 0 & 1 \\ \hline
0 & 1 & 1 & 1 & 1 & 0 & 0 & 0 & 1 & 1 & 0 & 0 & 1 & 0 & 1 & 0 & 1 & 0 & 0 & 0 & 0 & 0 & 1 & 0 & 1 & 0 \\ \hline
0 & 1 & 1 & 1 & 1 & 1 & 0 & 0 & 1 & 1 & 0 & 0 & 1 & 1 & 1 & 1 & 1 & 1 & 0 & 0 & 0 & 0 & 1 & 1 & 1 & 1 \\ \hline
1 & 0 & 0 & 0 & 0 & 0 & 0 & 0 & 0 & 0 & 0 & 0 & 0 & 0 & 0 & 0 & 0 & 0 & 0 & 0 & 0 & 0 & 0 & 0 & 0 & 0 \\ \hline
1 & 0 & 0 & 0 & 0 & 1 & 0 & 0 & 0 & 0 & 0 & 1 & 0 & 0 & 0 & 0 & 0 & 0 & 0 & 0 & 0 & 0 & 0 & 0 & 0 & 0 \\ \hline
1 & 0 & 0 & 0 & 1 & 0 & 0 & 0 & 0 & 0 & 1 & 0 & 0 & 0 & 0 & 0 & 0 & 0 & 0 & 0 & 0 & 0 & 0 & 0 & 0 & 0 \\ \hline
1 & 0 & 0 & 0 & 1 & 1 & 0 & 0 & 0 & 0 & 1 & 1 & 0 & 0 & 0 & 0 & 0 & 0 & 0 & 0 & 0 & 0 & 0 & 0 & 0 & 0 \\ \hline
1 & 0 & 0 & 1 & 0 & 0 & 0 & 1 & 0 & 0 & 0 & 0 & 0 & 0 & 0 & 0 & 0 & 0 & 0 & 0 & 0 & 0 & 0 & 0 & 0 & 0 \\ \hline
1 & 0 & 0 & 1 & 0 & 1 & 0 & 1 & 0 & 0 & 0 & 1 & 0 & 0 & 0 & 0 & 0 & 1 & 0 & 0 & 0 & 1 & 0 & 0 & 0 & 0 \\ \hline
1 & 0 & 0 & 1 & 1 & 0 & 0 & 1 & 0 & 0 & 1 & 0 & 0 & 0 & 0 & 0 & 1 & 0 & 0 & 0 & 1 & 0 & 0 & 0 & 0 & 0 \\ \hline
1 & 0 & 0 & 1 & 1 & 1 & 0 & 1 & 0 & 0 & 1 & 1 & 0 & 0 & 0 & 0 & 1 & 1 & 0 & 0 & 1 & 1 & 0 & 0 & 0 & 0 \\ \hline
1 & 0 & 1 & 0 & 0 & 0 & 1 & 0 & 0 & 0 & 0 & 0 & 0 & 0 & 0 & 0 & 0 & 0 & 0 & 0 & 0 & 0 & 0 & 0 & 0 & 0 \\ \hline
1 & 0 & 1 & 0 & 0 & 1 & 1 & 0 & 0 & 0 & 0 & 1 & 0 & 0 & 0 & 1 & 0 & 0 & 0 & 1 & 0 & 0 & 0 & 0 & 0 & 0 \\ \hline
1 & 0 & 1 & 0 & 1 & 0 & 1 & 0 & 0 & 0 & 1 & 0 & 0 & 0 & 1 & 0 & 0 & 0 & 1 & 0 & 0 & 0 & 0 & 0 & 0 & 0 \\ \hline
1 & 0 & 1 & 0 & 1 & 1 & 1 & 0 & 0 & 0 & 1 & 1 & 0 & 0 & 1 & 1 & 0 & 0 & 1 & 1 & 0 & 0 & 0 & 0 & 0 & 0 \\ \hline
1 & 0 & 1 & 1 & 0 & 0 & 1 & 1 & 0 & 0 & 0 & 0 & 0 & 0 & 0 & 0 & 0 & 0 & 0 & 0 & 0 & 0 & 0 & 0 & 0 & 0 \\ \hline
1 & 0 & 1 & 1 & 0 & 1 & 1 & 1 & 0 & 0 & 0 & 1 & 0 & 0 & 0 & 1 & 0 & 1 & 0 & 1 & 0 & 1 & 0 & 0 & 0 & 0 \\ \hline
1 & 0 & 1 & 1 & 1 & 0 & 1 & 1 & 0 & 0 & 1 & 0 & 0 & 0 & 1 & 0 & 1 & 0 & 1 & 0 & 1 & 0 & 0 & 0 & 0 & 0 \\ \hline
1 & 0 & 1 & 1 & 1 & 1 & 1 & 1 & 0 & 0 & 1 & 1 & 0 & 0 & 1 & 1 & 1 & 1 & 1 & 1 & 1 & 1 & 0 & 0 & 0 & 0 \\ \hline
1 & 1 & 0 & 0 & 0 & 0 & 0 & 0 & 0 & 0 & 0 & 0 & 0 & 0 & 0 & 0 & 0 & 0 & 0 & 0 & 0 & 0 & 0 & 0 & 0 & 0 \\ \hline
1 & 1 & 0 & 0 & 0 & 1 & 0 & 0 & 0 & 0 & 0 & 1 & 0 & 1 & 0 & 0 & 0 & 0 & 0 & 0 & 0 & 0 & 0 & 0 & 0 & 0 \\ \hline
1 & 1 & 0 & 0 & 1 & 0 & 0 & 0 & 0 & 0 & 1 & 0 & 1 & 0 & 0 & 0 & 0 & 0 & 0 & 0 & 0 & 0 & 0 & 0 & 0 & 0 \\ \hline
1 & 1 & 0 & 0 & 1 & 1 & 0 & 0 & 0 & 0 & 1 & 1 & 1 & 1 & 0 & 0 & 0 & 0 & 0 & 0 & 0 & 0 & 0 & 0 & 0 & 0 \\ \hline
1 & 1 & 0 & 1 & 0 & 0 & 0 & 1 & 0 & 1 & 0 & 0 & 0 & 0 & 0 & 0 & 0 & 0 & 0 & 0 & 0 & 0 & 0 & 0 & 0 & 0 \\ \hline
1 & 1 & 0 & 1 & 0 & 1 & 0 & 1 & 0 & 1 & 0 & 1 & 0 & 1 & 0 & 0 & 0 & 1 & 0 & 0 & 0 & 1 & 0 & 0 & 0 & 1 \\ \hline
1 & 1 & 0 & 1 & 1 & 0 & 0 & 1 & 0 & 1 & 1 & 0 & 1 & 0 & 0 & 0 & 1 & 0 & 0 & 0 & 1 & 0 & 0 & 0 & 1 & 0 \\ \hline
1 & 1 & 0 & 1 & 1 & 1 & 0 & 1 & 0 & 1 & 1 & 1 & 1 & 1 & 0 & 0 & 1 & 1 & 0 & 0 & 1 & 1 & 0 & 0 & 1 & 1 \\ \hline
1 & 1 & 1 & 0 & 0 & 0 & 1 & 0 & 1 & 0 & 0 & 0 & 0 & 0 & 0 & 0 & 0 & 0 & 0 & 0 & 0 & 0 & 0 & 0 & 0 & 0 \\ \hline
1 & 1 & 1 & 0 & 0 & 1 & 1 & 0 & 1 & 0 & 0 & 1 & 0 & 1 & 0 & 1 & 0 & 0 & 0 & 1 & 0 & 0 & 0 & 1 & 0 & 0 \\ \hline
1 & 1 & 1 & 0 & 1 & 0 & 1 & 0 & 1 & 0 & 1 & 0 & 1 & 0 & 1 & 0 & 0 & 0 & 1 & 0 & 0 & 0 & 1 & 0 & 0 & 0 \\ \hline
1 & 1 & 1 & 0 & 1 & 1 & 1 & 0 & 1 & 0 & 1 & 1 & 1 & 1 & 1 & 1 & 0 & 0 & 1 & 1 & 0 & 0 & 1 & 1 & 0 & 0 \\ \hline
1 & 1 & 1 & 1 & 0 & 0 & 1 & 1 & 1 & 1 & 0 & 0 & 0 & 0 & 0 & 0 & 0 & 0 & 0 & 0 & 0 & 0 & 0 & 0 & 0 & 0 \\ \hline
1 & 1 & 1 & 1 & 0 & 1 & 1 & 1 & 1 & 1 & 0 & 1 & 0 & 1 & 0 & 1 & 0 & 1 & 0 & 1 & 0 & 1 & 0 & 1 & 0 & 1 \\ \hline
1 & 1 & 1 & 1 & 1 & 0 & 1 & 1 & 1 & 1 & 1 & 0 & 1 & 0 & 1 & 0 & 1 & 0 & 1 & 0 & 1 & 0 & 1 & 0 & 1 & 0 \\ \hline
1 & 1 & 1 & 1 & 1 & 1 & 1 & 1 & 1 & 1 & 1 & 1 & 1 & 1 & 1 & 1 & 1 & 1 & 1 & 1 & 1 & 1 & 1 & 1 & 1 & 1 \\ \hline
    \end{tabular}
    }
    \end{adjustbox}
    \caption{List of vertices of the polytope describing the correlation scenario P3.}
    \label{tab:vertices}
\end{table}

\begin{table}
    \centering
    %\begin{adjustbox}{width=0.9\textwidth}   
    {\small
    \begin{tabular}{|cccc|cccc|} 
        \hline
        \multicolumn{8}{|c|}{Vertices} \\
        \hline
        \multicolumn{4}{|c|}{Single events} & \multicolumn{4}{c|}{Pairs events} \\
        \hline
 0 & 0 & 0 & 0 & 0 & 0 & 0 & 0 \\
 0 & 0 & 0 & 1 & 0 & 0 & 0 & 0 \\
 0 & 0 & 1 & 0 & 0 & 0 & 0 & 0 \\
 0 & 0 & 1 & 1 & 0 & 0 & 0 & 0 \\
 0 & 1 & 0 & 0 & 0 & 0 & 0 & 0 \\
 0 & 1 & 0 & 1 & 0 & 0 & 0 & 1 \\
 0 & 1 & 1 & 0 & 0 & 0 & 1 & 0 \\
 0 & 1 & 1 & 1 & 0 & 0 & 1 & 1 \\
 1 & 0 & 0 & 0 & 0 & 0 & 0 & 0 \\
 1 & 0 & 0 & 1 & 0 & 1 & 0 & 0 \\
 1 & 0 & 1 & 0 & 1 & 0 & 0 & 0 \\
 1 & 0 & 1 & 1 & 1 & 1 & 0 & 0 \\
 1 & 1 & 0 & 0 & 0 & 0 & 0 & 0 \\
 1 & 1 & 0 & 1 & 0 & 1 & 0 & 1 \\
 1 & 1 & 1 & 0 & 1 & 0 & 1 & 0 \\
 1 & 1 & 1 & 1 & 1 & 1 & 1 & 1 \\ \hline
    \end{tabular}
    }
    %\end{adjustbox}
    \caption{List of vertices of the polytope describing the correlation scenario P3, with the marginalisation of the knowledge of the middle party.}
    \label{tab:8d_vertices}
\end{table}

\begin{figure}[H]
    \centering
    \includegraphics[width=0.45\textwidth]{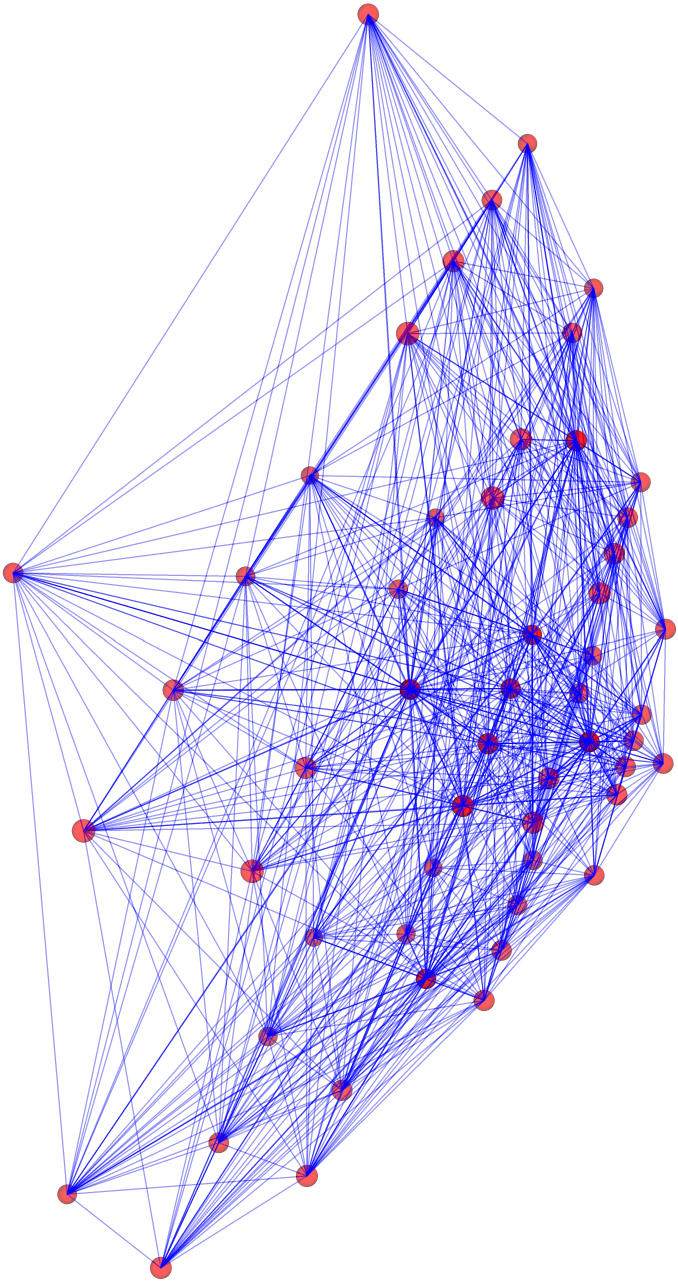}
    \includegraphics[width=0.45\textwidth]{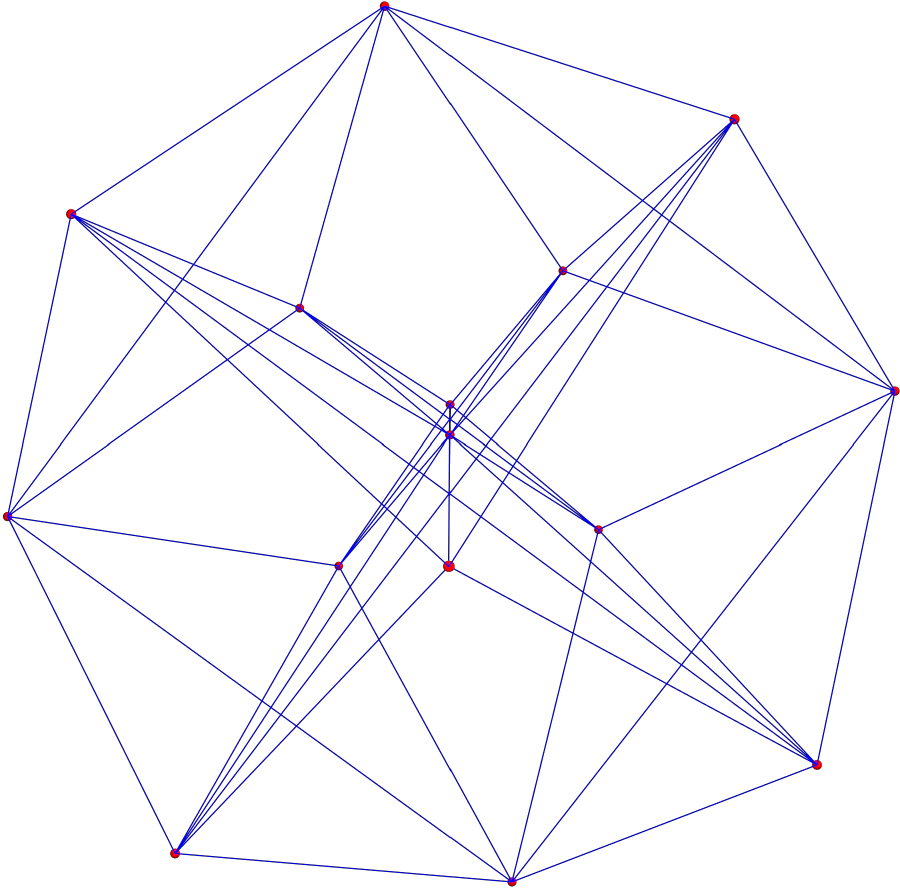}
    \caption{A graph is constructed using the vertices of the 26D polytope (on left), or in 8D (on right) as nodes, with two nodes connected by an edge if and only if the corresponding vertices have visibility of each other. The 3 dimensional coordinates are obtained using singular value decomposition.}
    \label{fig:graph_vertices}
\end{figure}

\section{Appendix: Covering problems} \label{appendix:B}

In this appendix is proved constructively that the minimum number of generators, i.e. vertices generating edges through theorem \ref{theorem:test}, is equal to the lower bound of 4.
As in theorem \ref{theorem:vertices_generators}, the deterministic strategies in equation \ref{eqn:det_strategies} will be denoted as a triplet $(\alpha_i,\beta_j,\gamma_k)$ where $i,j,k \in {0,1,2,3}$.
As proved in the theorem, the matrix $(\alpha_i,\gamma_k)$ can be used to describe the effect of the test described in theorem \ref{theorem:test}. A vertex belonging to the class described by an element of this matrix, can see all the other class belonging to the same column or row: for example $(\alpha_1, \beta_0,\gamma_3)$ belongs to the class $(\alpha_1,\gamma_3)$ and can see all the vertices in classes $(\alpha_i,\gamma_3)$ and $(\alpha_1,\gamma_j)$.
Using this description, in table \ref{tab:shapes1} is found an explicit construction of the generators from diagonal elements, while a second construction, is found in table \ref{tab:shapes2}. 
The same steps are followed for the marginalisation of the middle party resulting in an 8 dimensional reduced non convex polytope, reported in tables \ref{tab:8d_shapes1} and \ref{tab:8d_shapes2}. In this case each class is composed by exactly one vertex: the marginalisation process amounts precisely to collapsing the information related to $\beta_i$.

\begin{table}
    \centering
    \begin{adjustbox}{width=\textwidth}   
    \begin{tabular}{||ccc|c||ccc|c||ccc|c||ccc|c||c||} 
        \hline
        \multicolumn{4}{||c||}{Set 1} & \multicolumn{4}{c||}{Set 2} & \multicolumn{4}{c||}{Set 3} & \multicolumn{4}{c||}{Set 4} &  \\
        \hline
        \multicolumn{3}{||c|}{vertices} & \# & \multicolumn{3}{c|}{vertices} & \# & \multicolumn{3}{c|}{vertices} & \# & \multicolumn{3}{c|}{vertices} & \# & Tot \\
        \hline
        $\alpha_0$ & $\beta_0$ & $\gamma_0$ & 1 & $\alpha_1$ & $\beta_1$ & $\gamma_1$ & 1 & $\alpha_2$ & $\beta_2$ & $\gamma_2$ & 1 & $\alpha_3$ & $\beta_3$ & $\gamma_3$ & 1 & \\ 
        $\alpha_0$ & $\beta_0$ & x & 3 & $\alpha_1$ & $\beta_1$ & x & 2 & $\alpha_2$ & $\beta_2$ & x & 1 & $\alpha_3$ & $\beta_3$ & x & 0 & \\
        $\alpha_0$ & x & $\gamma_0$ & 3 & $\alpha_1$ & x & $\gamma_1$ & 3 & $\alpha_2$ & x & $\gamma_2$ & 3 & $\alpha_3$ & x & $\gamma_3$ & 3 & \\
        x & $\beta_0$ & $\gamma_0$ & 3 & x & $\beta_1$ & $\gamma_1$ & 2 & x & $\beta_2$ & $\gamma_2$ & 1 & x & $\beta_3$ & $\gamma_3$ & 0 & \\
        $\alpha_0$ & x & x & 9 & $\alpha_1$ & x & x & 6 & $\alpha_2$ & x & x & 3 & $\alpha_3$ & x & x & 0 & \\
        x & x & $\gamma_0$ & 9 & x & x & $\gamma_1$ & 6 & x & x & $\gamma_2$ & 3 & x & x & $\gamma_3$ & 0 & \\ \hline
        &&&28&&&&$+20$&&&&$+12$&&&&$+4$ & $=64$\\ \hline
    \end{tabular}
    \end{adjustbox}
    \caption{Method 1, each generator is a diagonal element $(\alpha_i,\beta_i,\gamma_i)$. Every vertex is reached by the edges generated.}
    \label{tab:shapes1}
\end{table}

\begin{table}
    \centering
    \begin{adjustbox}{width=\textwidth}   
    \begin{tabular}{||ccc|c||ccc|c||ccc|c||ccc|c||c||} 
        \hline
        \multicolumn{4}{||c||}{Set 1} & \multicolumn{4}{c||}{Set 2} & \multicolumn{4}{c||}{Set 3} & \multicolumn{4}{c||}{Set 4} &  \\
        \hline
        \multicolumn{3}{||c|}{vertices} & \# & \multicolumn{3}{c|}{vertices} & \# & \multicolumn{3}{c|}{vertices} & \# & \multicolumn{3}{c|}{vertices} & \# & Tot \\
        \hline
        $\alpha_0$ & $\beta_0$ & $\gamma_0$ & 1 & $\alpha_0$ & $\beta_0$ & $\gamma_1$ & 0 & $\alpha_0$ & $\beta_0$ & $\gamma_2$ & 0 & $\alpha_0$ & $\beta_0$ & $\gamma_3$ & 0 & \\ 
        $\alpha_0$ & $\beta_0$ & x & 3 & $\alpha_0$ & $\beta_0$ & x & 0 & $\alpha_0$ & $\beta_0$ & x & 0 & $\alpha_0$ & $\beta_0$ & x & 0 & \\
        $\alpha_0$ & x & $\gamma_0$ & 3 & $\alpha_0$ & x & $\gamma_1$ & 0 & $\alpha_0$ & x & $\gamma_2$ & 0 & $\alpha_0$ & x & $\gamma_3$ & 0 & \\
        x & $\beta_0$ & $\gamma_0$ & 3 & x & $\beta_0$ & $\gamma_1$ & 3 & x & $\beta_0$ & $\gamma_2$ & 3 & x & $\beta_0$ & $\gamma_3$ & 3 & \\
        $\alpha_0$ & x & x & 9 & $\alpha_0$ & x & x & 0 & $\alpha_0$ & x & x & 0 & $\alpha_0$ & x & x & 0 & \\
        x & x & $\gamma_0$ & 9 & x & x & $\gamma_1$ & 9 & x & x & $\gamma_2$ & 9 & x & x & $\gamma_3$ & 9 & \\ \hline
        &&&28&&&&$+12$&&&&$+12$&&&&$+12$ & $=64$\\ \hline
    \end{tabular}
    \end{adjustbox}
    \caption{Method 2, the generators (first row) belong to the same row.}
    \label{tab:shapes2}
\end{table}

\begin{table}
    \centering
    %\begin{adjustbox}{width=\textwidth}   
    \begin{tabular}{||cc|c||cc|c||cc|c||cc|c||c||} 
        \hline
        \multicolumn{3}{||c||}{Set 1} & \multicolumn{3}{c||}{Set 2} & \multicolumn{3}{c||}{Set 3} & \multicolumn{3}{c||}{Set 4} &  \\
        \hline
        \multicolumn{2}{||c|}{vertices} & \# & \multicolumn{2}{c|}{vertices} & \# & \multicolumn{2}{c|}{vertices} & \# & \multicolumn{2}{c|}{vertices} & \# & Tot \\
        \hline
        $\alpha_0$ & $\gamma_0$ & 1 & $\alpha_1$ & $\gamma_1$ & 1 & $\alpha_2$ & $\gamma_2$ & 1 & $\alpha_3$ & $\gamma_3$ & 1 & \\ 
        $\alpha_0$ &  x & 3 & $\alpha_1$ & x & 2 & $\alpha_2$ &  x & 1 & $\alpha_3$ &  x & 0 & \\
        x & $\gamma_0$ & 3 & x & $\gamma_1$ & 2 & x &  $\gamma_2$ & 1 & x & $\gamma_3$ & 0 & \\ \hline
        &&7&&&$+5$&&&$+3$&&&$+1$ & $=16$\\ \hline
    \end{tabular}
    %\end{adjustbox}
    \caption{Method 1, each set is generated from a vertex $(\alpha_i,\gamma_i)$.}
    \label{tab:8d_shapes1}
\end{table}

\begin{table}
    \centering
    %\begin{adjustbox}{width=\textwidth}   
    \begin{tabular}{||cc|c||cc|c||cc|c||cc|c||c||} 
        \hline
        \multicolumn{3}{||c||}{Set 1} & \multicolumn{3}{c||}{Set 2} & \multicolumn{3}{c||}{Set 3} & \multicolumn{3}{c||}{Set 4} &  \\
        \hline
        \multicolumn{2}{||c|}{vertices} & \# & \multicolumn{2}{c|}{vertices} & \# & \multicolumn{2}{c|}{vertices} & \# & \multicolumn{2}{c|}{vertices} & \# & Tot \\
        \hline
        $\alpha_0$ & $\gamma_0$ & 1 & $\alpha_0$ & $\gamma_1$ & 0 & $\alpha_0$ & $\gamma_2$ & 0 & $\alpha_0$ & $\gamma_3$ & 0 & \\ 
        $\alpha_0$ &  x & 3 & $\alpha_0$ & x & 0 & $\alpha_0$ &  x & 0 & $\alpha_0$ &  x & 0 & \\
        x & $\gamma_0$ & 3 & x & $\gamma_1$ & 3 & x &  $\gamma_2$ & 3 & x & $\gamma_3$ & 3 & \\ \hline
        &&7&&&$+3$&&&$+3$&&&$+3$ & $=16$\\ \hline
    \end{tabular}
    %\end{adjustbox}
    \caption{Method 2, the generators belong to the same row.}
    \label{tab:8d_shapes2}
\end{table}

\end{document}